\newlength\runit
\newcommand\numberthis{\addtocounter{equation}{1}\tag{\theequation}}
\theoremstyle{plain}
\newtheorem{theorem}{Theorem}[section]
\newtheorem{lemma}[theorem]{Lemma}
\newtheorem{proposition}[theorem]{Proposition}
\theoremstyle{definition}
\newtheorem{remark}{Remark}[section]
\renewcommand{\Re}{\operatorname{Re}}
\renewcommand{\leq}{\leqslant}
\renewcommand{\geq}{\geqslant}
\newcommand{\N}{\mathds{N}} 
\newcommand{\C}{\mathds{C}} 
\newcommand{\R}{\mathds{R}} 
\newcommand{\cF}{\mathcal{F}}
\newcommand{\cH}{\mathcal{H}}
\newcommand{\eps}{\epsilon}
\newcommand{\cA}{A}
\newcommand{\cB}{B}
\newcommand\smallO{
  \mathchoice
    {{\scriptstyle\mathcal{O}}}
    {{\scriptstyle\mathcal{O}}}
    {{\scriptscriptstyle\mathcal{O}}}
    {\scalebox{.7}{$\scriptscriptstyle\mathcal{O}$}}
  }
  \newcommand{\Par}[1]{\left( #1 \right)}
\newcommand{\ee}[1]{\mathrm{e}^{#1}}
\newcommand{\pscal}[1]{\ensuremath{\left\langle #1 \right\rangle}} 
\newcommand{\normt}[1]{\left\vert\kern-0.25ex\left\vert\kern-0.25ex\left\vert #1 \right\vert\kern-0.25ex\right\vert\kern-0.25ex\right\vert}
\newcommand{\di}{\,\mathrm{d}}
\DeclareMathOperator{\Id}{\mathds{1}}
\DeclareMathOperator{\sgn}{sign}
\newcommand{\ess}{\mathrm{ess}}
\newcommand{\bigO}{\mathcal{O}}
\title[Dirac operators in the weak coupling limit.]{Asymptotics for eigenvalues of one-dimensional Dirac operators in the weak coupling limit.}
\author[D.~Aldunate]{Danko Aldunate}
\address{Instituto de F\'isica, Pontificia Universidad Cat\'olica de Chile, Vicu\~na Mackenna 4860, Santiago 7820436, Chile.}
\email{dmaldunate@uc.cl}
\author[J.~M.~González--Brantes]{Juan Manuel González--Brantes}
\address{Instituto de F\'isica, Pontificia Universidad Cat\'olica de Chile, Vicu\~na Mackenna 4860, Santiago 7820436, Chile.}
\email{jmgonzalez4@uc.cl}
\author[H.~Van Den Bosch]{Hanne Van Den Bosch}
\address{Departamento de Ingenier\'ia Matem\'atica and Centro
de Modelamiento Matem\'atico (CNRS IRL 2807), Universidad de Chile, Beauchef 851, Santiago, Chile}
\email{hvdbosch@dim.uchile.cl}
\date{\today}
\begin{document}

\begin{abstract} 
In this paper, we derive new results on the asymptotic behavior of eigenvalues of perturbed one-dimensional massive Dirac operators in the weak coupling limit. Two classes of potentials are considered. For bounded Hermitian potentials~$V$ satisfying $|V(x)| \lesssim |x|^{-1}$ for large $|x|$, we recover the leading term, which may include a logarithmic correction if $V(x) \sim |x|^{-1}$ at infinity. 
For possibly non-Hermitian $L^1$ potentials satisfying a suitable moment condition, we obtain the second term in the asymptotic expansion.
The first result is based on a min–max principle adapted to the non-relativistic limit, while the second result is obtained via the Birman–Schwinger principle and resolvent expansions. 
\end{abstract}

\maketitle
\tableofcontents 

\section{Introduction}
\label{sec1:intro}
We consider Dirac operators of the form~$H_\epsilon=D_m - \epsilon V$, where $D_m$ is the free one-dimensional Dirac operator given by the differential expression
\begin{equation}\label{def_dirac}
    D_m  =  i \sigma_2 \partial_x  + m \sigma_3 =  \mqty(m  & \partial_{x} \\ -\partial_{x} & -m )
\end{equation}
 and $\sigma_j$ the standard $2\times 2$ Pauli matrices. The potential $V \in L^1_{\mathrm{loc}}( \R , \C^{2 \times 2})$ is decaying at infinity. We defer the precise assumptions on $V$ in each of the statements below. In any case, $\sigma_\ess(H_\epsilon) = \sigma(D_m) = \R \setminus (-m,m)$ and we are interested in the behavior of potential discrete eigenvalues of $H_\epsilon$ in the limit where the \emph{coupling constant} $\epsilon$ tends to zero. 

\bigskip

The analogous question for Schr\"odinger operators has been studied since the 70's \cite{BarrySimon1976, Klaus1977one_dimension,BlankenbeclerGoldbergerSimon1977}. In dimensions $1$ and $2$, arbitrarily small potentials $V \in L^p$ can create eigenvalues of the Schr\"odinger operator~$\frac{-\Delta}{2m} - V$. The weak coupling limit quantifies this behavior by considering a coupling constant $\epsilon >0$ and studying the asymptotics of the first eigenvalue $\lambda_V(\epsilon)$ of $-\frac{\Delta}{2m} -\eps V $ as $\epsilon$ approaches zero. We quote the part of the results that we will need, where we take into account the factor $1/(2m)$ to make the comparison with the Dirac case easier. Throughout, $\lambda^{(S)}_V(\epsilon) \le 0$ will denote the first eigenvalue of the Schr\"{o}dinger operator $-\frac{\Delta}{2m} -\eps V $ if it exists, with the convention~$\lambda^{(S)}_V(\epsilon) = 0$ if no eigenvalues exist.

\begin{theorem}[Short-range case, \cite{BlankenbeclerGoldbergerSimon1977}, Theorems 3.1 and 3.2]
\label{Sch_short_range_asymptotics}Assume that $\nu \in (0,1]$, $V:\R \to \R$ is such that $(1 + |x|^\nu)V(x) \in L^1(\R)$, and $\int V \ge 0$. Then $\frac{-\Delta}{2m} -\epsilon V$ has at least one eigenvalue for all sufficiently small $\epsilon > 0$. The smallest eigenvalue satisfies
\begin{equation} \label{eq:short-range-one-term-asymptotics}
      \lambda^{(S)}_V(\epsilon) = - \frac{1}{2m}\left( -{ m\eps} \int V  + \bigO(\eps^{1+\nu})\right)^2.
\end{equation}
If these hypotheses hold with $\nu= 1$, there is exactly one eigenvalue and
\begin{equation} \label{eq:short-range-two-term_asymptotics}
      \lambda^{(S)}_V(\epsilon) = - \frac{1}{2m}\left( -{ m\eps} \int V  - \frac{m^2\eps^2}{4} \iint V(x)|x-y|V(y)\di x \di y  + \smallO(\eps^2) \right)^2.
\end{equation}
\end{theorem}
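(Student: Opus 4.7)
The plan is to establish the theorem via the Birman--Schwinger (BS) principle combined with a resolvent expansion in the spectral parameter. Writing the polar decomposition $V = U|V|$ with $U = \sgn V$, a negative number $\lambda = -\kappa^2/(2m)$ is an eigenvalue of $-\Delta/(2m) - \eps V$ if and only if $1$ is an eigenvalue of the compact integral operator
\begin{equation*}
\BS_\kappa := \eps\, |V|^{1/2} \Par{-\frac{\Delta}{2m} + \frac{\kappa^2}{2m}}^{-1} U\, |V|^{1/2}.
\end{equation*}
Using the explicit one-dimensional resolvent kernel $(m/\kappa)\ee{-\kappa|x-y|}$, the operator $\BS_\kappa$ is integral with kernel $(\eps m/\kappa)\,|V(x)|^{1/2}\ee{-\kappa|x-y|}U(y)|V(y)|^{1/2}$. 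The prefactor $m/\kappa$ blowing up as $\kappa\to 0^+$ is precisely the one-dimensional mechanism behind the creation of a bound state at arbitrarily small coupling.

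The leading asymptotic \eqref{eq:short-range-one-term-asymptotics} follows from isolating the rank-one contribution. The elementary bound $|1 - \ee{-t}|\leq t^\nu$ for $t\geq 0$, $\nu\in(0,1]$, combined with the hypothesis $(1+|x|^\nu)V\in L^1(\R)$, yields the Hilbert--Schmidt decomposition
\begin{equation*}
\BS_\kappa = \frac{\eps m}{\kappa}\Par{P + \kappa^\nu R_\kappa}, \qquad P\phi = |V|^{1/2}\pscal{U|V|^{1/2},\phi},
\end{equation*}
where the rank-one operator $P$ has the single nonzero eigenvalue $\int V$ and $R_\kappa$ is bounded in Hilbert--Schmidt norm uniformly for $\kappa\in(0,1]$. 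If $\int V > 0$, analytic perturbation of the isolated eigenvalue of $P$ shows that $\BS_\kappa$ has a simple dominant eigenvalue $\mu(\kappa) = (\eps m/\kappa)\bigl(\int V + \bigO(\kappa^\nu)\bigr)$; solving the scalar equation $\mu(\kappa) = 1$ by a Banach contraction in a neighbourhood of $\kappa = m\eps\int V$ gives $\kappa(\eps) = m\eps\int V + \bigO(\eps^{1+\nu})$, and $\lambda^{(S)}_V(\eps) = -\kappa(\eps)^2/(2m)$ yields the claim. The degenerate case $\int V = 0$ requires a separate existence argument: testing the quadratic form on a slowly varying trial function produces a negative-energy state, and the asymptotic reduces to the a priori bound $\lambda^{(S)}_V(\eps) = \bigO(\eps^{2+2\nu})$.

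For the sharper statement \eqref{eq:short-range-two-term_asymptotics} under $\nu = 1$, push the kernel expansion one step further using $\ee{-\kappa|x-y|} = 1 - \kappa|x-y| + \bigO(\kappa^2|x-y|^2)$, giving
\begin{equation*}
\BS_\kappa = \frac{\eps m}{\kappa}P - \eps m\, Q + \bigO(\eps \kappa),
\end{equation*}
where $Q$ has kernel $|V(x)|^{1/2}|x-y|U(y)|V(y)|^{1/2}$ and satisfies $\pscal{U|V|^{1/2}, Q\,|V|^{1/2}} = \iint V(x)|x-y|V(y)\di x\di y$. Non-self-adjoint first-order perturbation theory on the rank-one eigenvalue of $P$, treating $-\kappa Q$ as the perturbation and using the biorthogonal pair $(|V|^{1/2}, U|V|^{1/2})$ with pairing $\int V$, then yields a refinement of $\mu(\kappa)$ whose second term is explicit in $\iint V|x-y|V$; iteratively solving $\mu(\kappa) = 1$ produces the stated two-term expansion for $\kappa(\eps)$ and hence~\eqref{eq:short-range-two-term_asymptotics}. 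Uniqueness of the bound state for $\nu = 1$ (the ``exactly one'' assertion) follows from strict operator monotonicity of $\BS_\kappa$ in $\kappa$, so the largest eigenvalue can cross $1$ at most once.

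The main obstacle is identifying the correct numerical coefficient of the second-order term. Since the perturbation is applied to an eigenvalue scaling as $1/\kappa$ while $\kappa$ itself must be solved for from an implicit equation, careless substitution can lose or gain overall factors; a clean bookkeeping is obtained by recasting the BS eigenvalue equation as the scalar Fredholm identity $\det(I - \BS_\kappa) = 0$, expanding this determinant along the finite-rank invariant subspace spanned by $P$ and its first correction up to order $\kappa^2$, and inverting the resulting polynomial in $\kappa$. Absolute convergence of $\iint V(x)|x-y|V(y)\di x\di y$ under $(1+|x|)V\in L^1$ is immediate from $|x-y| \leq |x| + |y|$ and Tonelli, and the control of remainders in both steps reduces to Hilbert--Schmidt estimates of the residual operators $R_\kappa$.
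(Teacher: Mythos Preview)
This theorem is not proved in the paper at all: it is quoted verbatim from the cited reference \cite{BlankenbeclerGoldbergerSimon1977} as background for the Schr\"odinger case, and the paper provides no argument for it. There is therefore no ``paper's own proof'' to compare your proposal against.

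That said, your Birman--Schwinger approach is the classical one (going back to Simon and Klaus) and is precisely the method the present paper employs in Section~\ref{sec:short-range} for its Dirac analogue, Theorem~\ref{thm:second_order}: isolate the rank-one singular part of the sandwiched resolvent, reduce the eigenvalue condition to a scalar equation of the form $\kappa = \eps m\pscal{b,(1-\eps M)^{-1}a}$, and expand. So your outline is correct in spirit and aligned with the paper's own techniques.

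Two technical points deserve tightening. First, in the two-term expansion you write the remainder as $\bigO(\eps\kappa)$, which comes from bounding $\ee{-\kappa|x-y|}-1+\kappa|x-y|$ by $\kappa^2|x-y|^2$; controlling this in Hilbert--Schmidt norm requires $(1+|x|^2)V\in L^1$, which is \emph{not} assumed for~\eqref{eq:short-range-two-term_asymptotics}. With only the first moment you must argue by dominated convergence to obtain $\smallO_\kappa(1)$ instead---exactly the issue the paper flags in Remark~\ref{rem:two-term-with-1st-moment} and treats at the end of Section~\ref{sec:short-range}. Second, your uniqueness argument (``strict operator monotonicity of $\BS_\kappa$'') is not well-posed when $V$ changes sign, since $\BS_\kappa$ is then non-self-adjoint and monotonicity of eigenvalues is not automatic. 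The robust argument is rather that the non-rank-one part $M(z)$ is uniformly bounded, so for small $\eps$ all eigenvalues of $\BS_\kappa$ other than the dominant one stay below~$1$; this is how uniqueness is obtained in Section~\ref{sec:short-range}.
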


The case where $V$ decays at infinity as~$|x|^{-1}$, i.e., the Coulomb potential, is not covered by the previous theorem. However, for bounded potentials that decay like the Coulomb potential plus short-range corrections, a detailed asymptotic expansion was derived in \cite{avronherbstsimon3}, and we reproduce its leading term below.

\begin{theorem}[Long-range-case, \cite{avronherbstsimon3} Theorem 2.1]\label{Sch_long_range_asymptotics}
    Assume that $|V(x)| \le C_1(1 + |x|)^{-1}$ and $|V(x) - 1/(1+ |x|)| \le C_2(1 + |x|)^{-1-\nu}$ for some $\nu >0$. Then $\frac{-\Delta}{2m} -\epsilon V$ has at least one eigenvalue for all sufficiently small $\epsilon > 0 $. The smallest eigenvalue satisfies
\begin{equation} \label{eq:long-range-one-term-asymptotics}
      \lambda^{(S)}_V(\epsilon) = - \frac{1}{2m} (2 m \eps)^2 |\log(\eps)|\bigl( |\log(\eps)|  + \smallO(\log(\eps))\bigr) .
\end{equation} 
\end{theorem}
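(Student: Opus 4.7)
The plan is to apply the Birman--Schwinger principle together with a quantitative analysis of the leading eigenvalue of the resulting integral operator. For $\lambda = -\kappa^2/(2m) < 0$, the free resolvent of $-\Delta/(2m)$ on $\R$ has integral kernel $G_\kappa(x,y) = (m/\kappa)\ee{-\kappa|x-y|}$, and Birman--Schwinger asserts that $\lambda$ is an eigenvalue of $-\Delta/(2m) - \eps V$ if and only if the compact operator $\BS_\kappa$ with kernel $|V(x)|^{1/2}G_\kappa(x,y)V(y)^{1/2}\sgn(V(y))$ has $1/\eps$ as an eigenvalue. Since the hypothesis forces $V$ to coincide with the positive Coulomb tail $1/(1+|x|)$ up to a short-range $L^1$ remainder, the long-range positive contribution dominates as $\kappa \to 0$, and I expect the top eigenvalue to obey $\mu_1(\kappa) = (m/\kappa)\tau_1(\kappa)(1+\smallO(1))$, where $\tau_1(\kappa)$ is the top eigenvalue of the positive self-adjoint operator $T_\kappa$ on $L^2(\R)$ with kernel $u(x)\ee{-\kappa|x-y|}u(y)$, $u := V_+^{1/2}$.

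The central computation is then $\tau_1(\kappa) = 2|\log\kappa|(1+\smallO(1))$ as $\kappa \to 0$. For the upper bound I would apply a Schur test with weight $w = u$, reducing the question to
\begin{equation*}
    \tau_1(\kappa) \le \sup_{x\in\R}\int_\R\ee{-\kappa|x-y|}V(y)\di y.
\end{equation*}
The supremum is attained for $|x| \lesssim 1/\kappa$ and evaluates asymptotically to $2|\log\kappa|(1+\smallO(1))$ via the exponential integral identity $\int_0^\infty \ee{-\kappa y}/(1+y)\di y = |\log\kappa| + \bigO(1)$, the short-range correction $\bigO((1+|x|)^{-1-\nu})$ being absorbed into the $\bigO(1)$ term. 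For the matching lower bound I would test the Rayleigh quotient against $\phi = u\cdot\chi_{[-R,R]}$ with $R = \kappa^{-1+\delta}$ for a small $\delta > 0$: since $\ee{-\kappa|x-y|} \ge \ee{-2\kappa^\delta}$ on $[-R,R]^2$ with $\ee{-2\kappa^\delta}\to 1$ as $\kappa\to 0$,
\begin{equation*}
    \tau_1(\kappa) \ge \ee{-2\kappa^\delta}\int_{-R}^R V(x)\di x = 2(1-\delta)|\log\kappa|(1+\smallO(1)),
\end{equation*}
and letting $\delta\to 0$ slowly with $\kappa$ recovers the sharp constant.

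Inserting these asymptotics into the Birman--Schwinger condition $\eps\mu_1(\kappa) = 1$ yields the transcendental equation $\kappa = 2m\eps|\log\kappa|(1+\smallO(1))$, which is solved by a single fixed-point iteration: the ansatz $\kappa = 2m\eps|\log\eps|(1+\smallO(1))$ is self-consistent because $|\log\kappa| = |\log\eps| + \bigO(\log|\log\eps|) = |\log\eps|(1+\smallO(1))$. Substituting back gives $\lambda_V^{(S)}(\eps) = -\kappa^2/(2m) = -(1/(2m))(2m\eps)^2|\log\eps|^2(1+\smallO(1))$, which is the expansion \eqref{eq:long-range-one-term-asymptotics}. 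The hard part will be the reduction of $\mu_1(\kappa)$ to $(m/\kappa)\tau_1(\kappa)$: since $V$ may change sign on a compact set, $\BS_\kappa$ is non-self-adjoint, so one must argue that the sign-indefinite part of the kernel produces only an $\bigO(1/\kappa)$ correction to the top eigenvalue, much smaller than the leading $(m/\kappa)|\log\kappa|$. I would control this via the $L^1$ smallness of $V - 1/(1+|x|)$ together with analytic-perturbation theory applied to the simple principal eigenvalue of the compact family $\BS_\kappa$, starting from the known positive case $V = V_+$.
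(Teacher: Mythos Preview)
This theorem is not proved in the paper; it is quoted from \cite{avronherbstsimon3} as background input for the Dirac results, so there is no in-paper argument to compare your sketch against.

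On its own merits your outline is essentially correct. The Schur bound with weight $u$ and the trial state $u\chi_{[-R,R]}$, $R=\kappa^{-1+\delta}$, do yield $\tau_1(\kappa)=2|\log\kappa|(1+\smallO(1))$, and the implicit equation $\kappa=2m\eps|\log\kappa|(1+\smallO(1))$ is solved exactly as you indicate. The one weak spot is the step you yourself flag: the reduction of the top eigenvalue $\mu_1(\kappa)$ of the sign-indefinite $\BS_\kappa$ to $(m/\kappa)\tau_1(\kappa)$. Invoking analytic perturbation theory is awkward here, because it would require you to first establish that the principal eigenvalue of $\BS_\kappa$ is simple with a spectral gap of order $(m/\kappa)|\log\kappa|$, and neither fact is obvious a priori. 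A cleaner route bypasses this entirely: since $V$ is real, $\BS_\kappa$ is isospectral to the \emph{self-adjoint} operator $G_\kappa^{1/2}VG_\kappa^{1/2}$. Writing $V=V_+-V_-$ with $V_-\ge 0$ bounded and compactly supported (hence in $L^1$), one has
\[
\norm{G_\kappa^{1/2}V_-G_\kappa^{1/2}}=\norm{V_-^{1/2}G_\kappa V_-^{1/2}}\le \frac{m}{\kappa}\norm{V_-}_{L^1},
\]
so the ordinary min--max principle for self-adjoint operators gives
\[
\mu_1(\kappa)=\lambda_1\bigl(G_\kappa^{1/2}V_+G_\kappa^{1/2}\bigr)+\bigO(1/\kappa)=\frac{m}{\kappa}\,\tau_1(\kappa)+\bigO(1/\kappa),
\]
which is exactly the estimate you need and avoids any spectral-gap considerations.
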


Since these seminal works, weak coupling asymptotics and the phenomenon of virtual bound states (the presence of an eigenvalue for arbitrarily small couplings) have been a recurring topic in mathematical physics literature, with extensions to waveguides \cite{ExnerSeba1989waveguides, Exner1993waveguides, bulla1997waveguide}, magnetic Schr\"odinger operators \cite{vugalter1995extremal, frank2011weakly, fialova2025virtual}, periodic Schr\"{o}dinger operators \cite{zelenko2016virtual}, higher order operators \cite{arazy2006virtual}, and generalized kinetic energies with degenerate symbols \cite{weidl1999virtual, FrankHainzlNabokoSeiringer, hainzl2010asymptotic, cueninmerz2021weak, hoang2023quantitative}. For Dirac operators, the analogous question was treated first by Siegl and Cuenin in \cite{CueninSiegl2018}, who obtained the first term in the expansion for Dirac operators with potentials $V \in L^1 \cap L^2$, where $V$ is allowed to be non-Hermitian. The complementary question of the absence of eigenvalues for all sufficiently small potentials has been treated in~\cite{cossetti2020absence} and bounds for eigenvalues in dimension $d \ge 2$ have been obtained in~\cite{Dancona}.

\medskip

Here, we extend these results in complementary directions: we obtain the first term in the asymptotic expansion for the long-range case and the second term in the expansion for the short-range case. First, we treat the case of long-range potentials. We use the notation
$$
V = \begin{pmatrix}
    V_{11} & V_{12} \\ V_{21} & V_{22}
\end{pmatrix}, \quad \text{and} \quad U =\int_{\R} V\di x = \begin{pmatrix}
    U_{11} & U_{12} \\ U_{21} & U_{22}
\end{pmatrix}.
$$
If $V$ is Hermitian, then $V_{11}, V_{22} \in \R$ and $V_{21} = V_{12}^{*}$. Moreover,  for such potentials, the operator $H_{\eps}$ is unitarily equivalent to $U H_{\eps} U^{*}=D_m -\eps \tilde{V}$, with
\[U=\exp\Par{-i\eps \int_{0}^{x}  \operatorname{Im} \Par{V_{12} }(t)\di t}\, , \]
and where $\tilde{V}_{12}$ and $\tilde{V}_{21}$ are real-valued. Thus, we may assume without loss of generality that $V_{12}$ is real-valued, since its imaginary part plays the role of a magnetic potential that can be gauged away in dimension~$1$.

\begin{theorem} \label{thm:long_range}
 Assume that $V \in L^\infty(\R, \C^{2\times 2})$ is Hermitian, that the upper left component satisfies 
 $$|V_{11}(x)| \le \frac{C_1}{1 + |x|} \quad\text{ and } \quad \left|V_{11}(x) - \frac{1}{ 1+|x|}\right| \le \frac{C_2}{(1 + |x|)^{1+\nu}},$$ 
 for some $\nu >0$, and finally that $(\Re(V_{12}))^2 \le C_3(1 + |x|)^{-1}$. Then, for all sufficiently small $\eps >0$,  $D_m -\epsilon V$ has at least one positive eigenvalue~$\lambda_V^{(D)}(\epsilon)$ which satisfies
 \begin{equation} \label{eq:long-range-result}
  \lambda_V^{(D)}(\epsilon) = m -2m\eps^2 \log(\eps)^2+\smallO(\eps^{2}\log(\eps)^2).
 \end{equation}
\end{theorem}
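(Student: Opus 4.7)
The plan is to reduce the Dirac eigenvalue problem to an effective Schr\"odinger problem with potential $V_{11}$, for which Theorem~\ref{Sch_long_range_asymptotics} yields the desired logarithmic asymptotics. The bridge is the Dolbeault--Esteban--S\'er\'e (Talman-type) min-max principle, which characterizes the largest eigenvalue of $H_\eps$ in $(-m,m)$, when it exists, as $\lambda_V^{(D)}(\eps) = m + \inf_{\phi \in H^1(\R,\C)\setminus\{0\}} \mu(\phi)$, where $\mu(\phi) = \lambda(\phi) - m \in (-2m,0)$ is obtained by maximizing the Rayleigh quotient over the lower spinor component $\chi$. Writing $\psi=(\phi,\chi)^\transp$ and explicitly solving for the maximizer $\chi_\phi(\lambda) = -(m+\lambda+\eps V_{22})^{-1}(\partial_x\phi+\eps V_{21}\phi)$, a direct computation yields the nonlinear self-consistent equation
\begin{equation*}
\mu(\phi)\|\phi\|^2 = -\eps \int V_{11}|\phi|^2 \di x + \int \frac{|\partial_x\phi + \eps V_{21}\phi|^2}{2m + \mu(\phi) + \eps V_{22}}\di x.
\end{equation*}

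For approximate minimizers, $\mu(\phi)$ and $\eps V_{22}$ are small, so expanding the denominator around $2m$ produces, at leading order, the Schr\"odinger quadratic form $\frac{1}{2m}\int|\partial_x\phi|^2 - \eps \int V_{11}|\phi|^2$. The cross-term $2\eps\Re\int V_{21}\bar\phi\,\partial_x\phi\,\di x$ is controlled via Cauchy--Schwarz using the assumption $(\Re V_{12})^2 \le C_3/(1+|x|)$: it is dominated by $\delta \int|\partial_x\phi|^2\di x + (\eps^2 C_3/\delta)\int|\phi|^2/(1+|x|)\,\di x$, and the second piece behaves like $\bigO(\eps)\cdot V_{11}|\phi|^2$ and is absorbed into the potential. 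In this way I would establish a two-sided sandwich
\begin{equation*}
\frac{1 - C\eps}{2m}\|\partial_x\phi\|_2^2 - \eps \int V_{11}^{+}|\phi|^2\di x \;\le\; \mu(\phi)\|\phi\|^2 \;\le\; \frac{1 + C\eps}{2m}\|\partial_x\phi\|_2^2 - \eps \int V_{11}^{-}|\phi|^2\di x,
\end{equation*}
where $V_{11}^{\pm}$ still satisfies the hypotheses of Theorem~\ref{Sch_long_range_asymptotics} with leading coefficient $1 + \bigO(\eps)$ of the $1/(1+|x|)$ tail. That theorem then yields, on both sides of the sandwich, $\inf_\phi \mu(\phi) = -2m\eps^2\log(\eps)^2 + \smallO(\eps^2\log(\eps)^2)$; the upper bound, evaluated on the Schr\"odinger trial function from \cite{avronherbstsimon3}, simultaneously establishes the existence of an eigenvalue below $m$ for $\eps$ small enough.

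The main obstacle is closing the self-consistency loop, since $\mu(\phi)$ appears in the denominator on the right-hand side. To justify the denominator expansion on approximate minimizers I expect to proceed by a bootstrap: a rough Schr\"odinger-type a priori estimate already yields $|\mu(\phi)| = \bigO(\eps)$ for any $\phi$ sufficiently close to the minimum, which is enough to make $\mu(\phi)/(2m)$ small, and the nonlinear correction then enters only at order $\mu^2/m = \bigO(\eps^4\log^4\eps)$, negligible at the stated precision. A secondary technical difficulty is to ensure that the contribution of $V_{22}$ and of the imaginary part of $V_{12}$ (which can be removed by a local gauge transformation $\phi\mapsto e^{i\alpha(x)}\phi$, as noted before the statement of the theorem) produce only admissible short-range corrections to the effective Schr\"odinger potential; under our bounded-$V$ hypothesis they yield $\bigO(\eps)$ perturbations of the coefficient of the $1/(1+|x|)$ term, which by Theorem~\ref{Sch_long_range_asymptotics} shift the eigenvalue only by $\bigO(\eps^3\log(\eps)^2)$ and thus do not affect the announced leading $\eps^2\log(\eps)^2$ asymptotic.
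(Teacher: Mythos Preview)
Your plan follows the same overarching strategy as the paper---use a min--max principle for operators with a spectral gap to reduce the Dirac eigenvalue to an effective Schr\"odinger problem governed by $V_{11}$, then invoke Theorem~\ref{Sch_long_range_asymptotics}---but the implementation differs. The paper does \emph{not} use the standard Talman decomposition into upper and lower components that you describe. Instead it takes the subspaces $l_+(h)=(h,-\alpha h')^\transp$ and $l_-(g)=(-\alpha g',g)^\transp$ with $\alpha=(2m)^{-1}$, tailored to the non-relativistic limit. With this choice the free Dirac cross-term $\langle l_+(h),D_m l_-(g)\rangle=\alpha^2\langle h'',g'\rangle$ is automatically of higher order, so the analysis stays linear: the lower bound on $\gamma_1$ comes from setting $g=0$, and the upper bound from taking $h$ equal to the Schr\"odinger eigenfunction and showing directly (by completing a square) that the supremum over $g$ is attained near $g=0$ up to $\bigO(\eps^3)$. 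This completely sidesteps the nonlinear self-consistent equation and the bootstrap you identify as the main obstacle.

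There is also a quantitative slip in your Cauchy--Schwarz step that, as written, prevents the sandwich from closing. Bounding $2\eps\Re\int V_{21}\bar\phi\,\partial_x\phi$ by $\delta\|\partial_x\phi\|^2+(\eps^2 C_3/\delta)\int |\phi|^2/(1+|x|)$ perturbs the kinetic coefficient by a factor $1+\bigO(\delta)$, while the coefficient of the long-range tail in the effective potential changes by a factor $1+\bigO(\eps/\delta)$ (compare the correction $\eps^2/\delta\cdot(1+|x|)^{-1}$ with the main term $\eps V_{11}\sim \eps(1+|x|)^{-1}$). These two factors cannot both be $1+\bigO(\eps)$; the optimal balance is $\delta\sim\eps^{1/2}$, yielding $1+\bigO(\eps^{1/2})$ on each side, and this is precisely what the paper does (see the definitions of $V_\pm=V_{11}\pm\eps^{1/2}V_{12}^2$ and $\kappa_\pm=1\pm\eps^{1/2}\alpha+\dots$ in the proof of Lemma~\ref{lem:bounds}). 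With your stated parameters the upper and lower Schr\"odinger comparison operators would have long-range coefficients $1\pm c$ with $c=O(1)$, producing bounds $-2m(1\pm c)^2\eps^2\log(\eps)^2$ that do not determine the leading constant. After switching to $\delta=\eps^{1/2}$ your argument goes through and yields the same $(1+\bigO(\eps^{1/2}))\lambda_{V_{11}}^{(S)}(\eps)$ conclusion as the paper.
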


\begin{remark}
    If, in Theorem~\ref{thm:long_range}, we replace the decay assumptions on $V_{11}$ by the corresponding assumptions on $V_{22}$, namely
$$
|V_{22}(x)| \le \frac{C_1}{1 + |x|}
\quad and \quad
\left|V_{22}(x) - \frac{1}{1+|x|}\right|
\le \frac{C_2}{(1 + |x|)^{1+\nu}},
$$
while keeping all other assumptions unchanged, then~$D_m -\eps V$ has at least one negative eigenvalue for all sufficiently small $\eps >0$. Moreover, the largest negative eigenvalue admits an asymptotic expansion of the form
    $$
    \lambda = -m + 2m\eps^2 \log(\eps)^2-\smallO(\eps^{2}\log(\eps)^2).
    $$ 
    This can be seen by noting that $\sigma_1 D_m \sigma_1 = - D_m$, and $$\sigma_1 V \sigma_1 = \begin{pmatrix}  V_{22} & V_{12} \\ V_{12}^* & V_{11} \end{pmatrix}.$$
\end{remark}

Theorem~\ref{thm:long_range} is proven in Section~\ref{sec:long_range_proof} using a min-max principle for operators with a gap as in \cite{esteban2007,SchSolTok-19} adapted to the nonrelativistic limit. It was used for the first time by two of us in collaboration with Edgardo Stockmeyer and Julien Ricaud in \cite{AldunateRicaudStockmeyerVanDenBosch2023}. To make the paper self-contained we introduce it in Section~\ref{sec:long_range_proof}. It gives a streamlined method to relate the asymptotics of Dirac operators in the weak-coupling limit with the corresponding Schr\"odinger case. Actually, Theorem~\ref{thm:long_range} is a corollary of Theorem~\ref{Sch_long_range_asymptotics} and the following more general result. 
\begin{proposition}\label{prop:general_asymptotic}
Assume that $V \in L^\infty(\R, \C^{2\times 2})$ 
\begin{enumerate}
\item is Hermitian
    \item has upper left entry $V_{11}$ satisfying the hypotheses of Theorem~\ref{Sch_long_range_asymptotics} or the hypotheses of Theorem~\ref{Sch_short_range_asymptotics} with $\nu = 1$ and $\int V_{11} >0$.
    \item has off-diagonal entries satisfying  $\Par{\Re V_{12}}^2 \lesssim \abs{V_{11}}$.
\end{enumerate}
     Then, for all sufficiently small~$\epsilon>0$, the operator $D_m -\eps V $ has at least one positive eigenvalue $\lambda_V^{(D)}(\epsilon)$ satisfying
      \begin{equation} \label{eq:comparison-result}
 \lambda_V^{(D)}(\epsilon) = m + (1 + \bigO(\epsilon^{1/2}))\lambda_{V_{11}}^{(S)}(\epsilon)  + \bigO(\epsilon^3).
 \end{equation}

 \end{proposition}

For short-range potentials this proposition, combined with Theorem~\ref{Sch_short_range_asymptotics}, allows to obtain 
 \begin{equation} 
 \lambda_{V}^{D}(\eps)=m-\frac{m}{2}U_{11}^2 \eps^2  + \smallO(\eps^2).
  \end{equation}
Hence, we recover the result of \cite{CueninSiegl2018}, but for Hermitian and bounded potentials, by a conceptually different method. It also shows that the error terms in \eqref{eq:long-range-result} can be improved by including more terms from~\cite{avronherbstsimon3} in \eqref{eq:long-range-one-term-asymptotics}.

\begin{remark} \label{rem:withoutV12}
    If $\Re(V_{12}) = 0$, \eqref{eq:comparison-result} can be refined to
     \begin{equation*}
     \lambda_V^{(D)}(\epsilon) = m + (1 + \bigO(\epsilon))\lambda_{V_{11}}^{(S)}(\epsilon).  
 \end{equation*}
 See Remark~\ref{rem:no-off-diagonal} after the proof.
\end{remark}

\medskip
 The second result in this paper concerns a two-term asymptotic formula for eigenvalues near the thresholds of Dirac operators with more rapidly decaying potentials, extending the analysis carried out in~\cite{CueninSiegl2018}. Here, our goal is to provide a refined analysis that captures the second-order asymptotics of eigenvalues bifurcating from the thresholds.
Here, we no longer assume that $V$ is Hermitian but we do need $(1+ |x|^2)V \in L^1$.
Before moving on, we recall that $V \in L^1$ suffices to define a closed extension of $D_m - \eps V$ with domain in $H^{1/2}$ and such that a Birman--Schwinger principle holds, see \cite{nenciu1976selfadjointness}.

In order to give the second-order term in the expansion, we use the boundedness of $\int |x|V(x)$ to define the following matrix in $\C^{2\times 2}$:
\begin{align*}
    F^{(\pm)}  &:= \iint V(x)\Par{\sgn(x-y) (i \sigma_2)-\frac{1}{2} \abs{x-y}\Upsilon_\pm} V(y)\di y\di x \numberthis \label{eq:definition U, F},
\end{align*}
where
\begin{align*}
    \Upsilon_\pm:=
    m(\sigma_3 \pm \Id).
\end{align*}
\begin{theorem} 
\label{thm:second_order}
Assume that $V\in L^1 $ is such that 
\[\int_\R(1+\abs{x}^2)\abs{V(x)}<\infty \, .\] As before, we define $H_\epsilon = D_m - \epsilon V$.
\begin{enumerate}
\item \label{Ev1} If $\Re(U_{11})>0$ (or $\Re(U_{11})= 0$ with $\Re(F^{+}_{11})>0 $) then, for sufficiently small $\eps >0$,  $H_\epsilon$ has an eigenvalue $z$ with expansion
    \begin{align*}
    z&= m-\frac{m}{2}U_{11}^2\eps^2 +m U_{11}F_{11}^{(+)}\eps^3 +\bigO(\eps^4)\, ,
\end{align*}
and this is the only eigenvalue of $H_\epsilon$ in the halfplane $\Re(z) > 0$. 

\item \label{Ev2} If $\Re(U_{22})< 0$, (or $\Re(U_{22}) =  0$ with $\Re(F^{-}_{22}) < 0 $)  then, for sufficiently small $\eps> 0$, $H_\epsilon$ has an eigenvalue $z$ with expansion
\begin{align*}
    z&=-m+\frac{m}{2}U_{22}^2\eps^2 -m U_{22}F_{22}^{(-)}\eps^3 +\bigO(\eps^4)\, ,
\end{align*}
and this is the only eigenvalue of $H_\epsilon$ in the halfplane $\Re(z) < 0$.

\end{enumerate}
\end{theorem}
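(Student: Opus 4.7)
The plan is to use the Birman--Schwinger principle (available for $L^1$ potentials by \cite{nenciu1976selfadjointness}) combined with a threshold expansion of the free resolvent $R_0(z) := (D_m - z)^{-1}$. I will prove only part \eqref{Ev1}; part \eqref{Ev2} follows from the unitary symmetry $\sigma_1 D_m \sigma_1 = -D_m$ already used in the remark after Theorem~\ref{thm:long_range}, which swaps the two thresholds, exchanges $U_{11}$ with $U_{22}$, and turns $F^{(+)}$ into $F^{(-)}$.

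Writing $R_0(z) = (D_m + z)(-\partial_x^2 + m^2 - z^2)^{-1}$ and using the Helmholtz kernel $(2\kappa)^{-1} e^{-\kappa|x-y|}$ with $\kappa := \sqrt{m^2 - z^2}$ on the principal branch (so that $\Re\kappa>0$ for $z$ close to $m$ outside $\sigma(D_m)$), a direct computation gives
\[
R_0(x,y;z) \;=\; \frac{m}{\kappa}\,P_+ \;+\; R^{(0)}(x,y) \;+\; \kappa R^{(1)}(x,y) \;+\; \bigO(\kappa^2),
\]
where $P_+ := \tfrac{1}{2}(\Id + \sigma_3)$ and $R^{(0)}(x,y)$ is an explicit matrix-valued kernel linear in $|x-y|$ and $\sgn(x-y)$. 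Its entries, once combined with the subleading terms produced by expanding $\tfrac{m+z}{2\kappa}$ and $e^{-\kappa|x-y|}$ in $\kappa$, reproduce the integrand $\sgn(x-y)(i\sigma_2) - \tfrac{1}{2}|x-y|\Upsilon_+$ of $F^{(+)}$ entering \eqref{eq:definition U, F}. The moment hypothesis $\int(1+|x|^2)|V|<\infty$ bounds $\iint(1+|x-y|^2)|V(x)||V(y)|\di x\di y$ and thereby controls the order-$\kappa^2$ remainder after sandwiching with $V$.

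Since the image of the singular term is the one-dimensional subspace $\vect\{e_1\}$, the Birman--Schwinger equation $\phi = \eps R_0(z) V \phi$ admits a Feshbach--Schur (Grushin) reduction. Splitting $\phi = \alpha\, e_1 + \phi^\perp$ with $\alpha \in \C$, the $\phi^\perp$-block is regular at $\kappa = 0$ and invertible for small $\eps$ in the regime $|\kappa|\lesssim \eps$ by a Neumann series in $\eps R^{(0)} V$; the resulting scalar Schur complement yields
\[
\kappa \;=\; m\eps\, U_{11} \;+\; m\eps^{2}\,G \;+\; \bigO(\eps^3),
\]
in which $G$ is the algebraic sum of the sandwich $(\iint V R^{(0)} V)_{11}$, the subleading correction produced by expanding $e^{-\kappa|x-y|}$ inside the singular term of $R_0$, and a ``double-pole'' contribution coming from iterating the leading singular operator once through the Neumann series. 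A direct calculation, using the antisymmetry of $\sgn(x-y)$ to combine the off-diagonal contributions, identifies $G$ with (a signed multiple of) $F^{(+)}_{11}$; combining this with $z = m - \kappa^2/(2m) + \bigO(\kappa^4)$ gives the asserted expansion $z = m - \tfrac{m}{2}U_{11}^2 \eps^2 + m U_{11} F^{(+)}_{11} \eps^3 + \bigO(\eps^4)$.

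Existence of a small root $\kappa$ with $\Re\kappa > 0$ under $\Re(U_{11}) > 0$ (or, in the degenerate case, under $\Re(U_{11}) = 0$ with $\Re(F^{(+)}_{11}) > 0$, where the effective leading scale of $\kappa$ is $\eps^2$) follows from an implicit-function / contraction argument applied to the scalar Grushin equation. Global uniqueness of the eigenvalue in $\{\Re z > 0\}$ is then obtained by combining this local analysis with a norm bound $\|\eps R_0(z) V\|<1$ for $z$ in the right half-plane at distance $\gtrsim \eps^2$ from $m$, which excludes any other fixed point of the Birman--Schwinger equation. The main technical obstacle is the accurate identification of $G$ with $F^{(+)}_{11}$: at this order several distinct contributions combine in the Feshbach quotient, and one must simultaneously track the signs of $i\sigma_2$, $\Upsilon_+$, and the principal branch of $\kappa$ so that the algebraic form of $F^{(+)}$ in \eqref{eq:definition U, F} is reproduced exactly.
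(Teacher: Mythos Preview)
Your approach is essentially the paper's: Birman--Schwinger, isolate the rank-one singular piece $\tfrac{m}{\kappa}P_+$ of the free resolvent near $z=m$, reduce to a scalar equation for $\kappa$, expand to second order, and convert back via $z=m-\kappa^2/(2m)+\bigO(\kappa^4)$. Two implementation differences are worth noting. First, the paper factorizes $V=B^*A$ so that the Birman--Schwinger operator $\eps A(D_m-z)^{-1}B^*$ is Hilbert--Schmidt on $L^2$ and the singular part becomes the honest rank-one operator $\tfrac{\eps m}{\kappa}\ket{a}\bra{b}$; your splitting $\phi=\alpha e_1+\phi^\perp$ with a constant spinor $e_1$ is not an $L^2$ decomposition, so you would need either this factorization or a weighted-space setup to make the Grushin reduction rigorous. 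Second, the paper locates the zero of $g_\eps(\kappa)=\eps m\pscal{b,(1-\eps M)^{-1}a}-\kappa$ by Rouch\'e's theorem, which gives existence and uniqueness in one stroke and avoids the separate ``double-pole'' bookkeeping you anticipate: once the singular term is peeled off into $\ket{a}\bra{b}$, the remainder $M(z)$ is uniformly bounded and its Neumann series contains no further singular iterations.
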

The $\eps^2$ term in the asymptotics was found in \cite{CueninSiegl2018}, with an error term $\smallO (\epsilon^2)$, assuming $V \in L^1 \cap L^2$. To bound the error terms after extracting the first two orders in the expansion by $\bigO(\eps^4)$, we need the second moment of $V$, i.e., $(1+|x|^{2})V(x) \in L^1$. Analogously, if we only require the first moment to be finite, we can obtain an error term of the form $\smallO(\eps^3)$, see Remark~\ref{rmk:without-2nd-moment} at the end of the paper.
Finally, the hypothesis $V \in L^2$ from \cite{CueninSiegl2018} leads to some technical simplifications but is not necessary for the core of the argument, as the authors of that paper point out as well. 

\begin{remark}
For real-valued potentials, the $\eps^2$-term in Theorem~\ref{thm:second_order} coincides with the corresponding term for Schr\"odinger operators, see Theorem~\ref{Sch_short_range_asymptotics}. By going one term further in the asymptotic expansion it possibly captures \emph{relativistic corrections}.
Indeed, in the definition of $F^{(\pm)}$ in \eqref{eq:definition U, F}, the first term has no analogue in the Schr\"odinger case, whereas the second term is analogous to the next-to-leading order term appearing in~\eqref{eq:short-range-two-term_asymptotics}. However, this first relativistic term vanishes for diagonal potentials, i.e., when $V_{12} = V_{21} = 0$.  
\end{remark}

\begin{remark}
    If $V$ decays exponentially, the resolvent can be continued analytically to a strip around the branch cut. 
     For this case, the proof of Theorem~\ref{thm:second_order} also shows that, if $\Re(U_{11})< 0$ (resp. $\Re(U_{22})>0$), near the threshold $\pm m$ there exists a resonance $z_+$ (resp. $z_-$), i.e.,  a pole on the unphysical sheet for the resolvent.  It satisfies
\begin{align*} z_+ &= m+\frac{m}{2}U_{11}^2\eps^2 -m U_{11}F_{11}^{(+)}\eps^3 +\bigO(\eps^4)\, , \\
    z_-&=-m-\frac{m}{2}U_{22}^2\eps^2 +m U_{22}F_{22}^{(-)}\eps^3 +\bigO(\eps^4)\, .
\end{align*}

\end{remark}


The proof of Theorem~\ref{thm:second_order} can be found in Section~\ref{sec:short-range}. It follows a common strategy  for this type of result, see e.g. \cite{BarrySimon1976} or \cite{CueninSiegl2018} for the Dirac case. 
 The starting point is the characterization of eigenvalues by a Birman–Schwinger principle and the result then follows from a careful expansion of the Dirac resolvent. 
 

\section*{Acknowledgments}
\noindent \thanks{We are grateful to Jean-Claude Cuenin for bringing this problem to our attention and for interesting discussions about the methods. This work has been supported by ANID (Chile) through FONDECYT Projects \#125-0596 and \#123-1539 and the Center for Mathematical Modeling through Basal Funding \#FB21-0005. DA and JMGB thank the Center of Mathematical Modeling at the Universidad de Chile for its hospitality during this research. We are indebted to the anonymous referees for their meticulous review of the preprint; their suggestions greatly enhanced the clarity and quality of this paper. }

\section{The min-max principle and the long range case.} 
\label{sec:long_range_proof}

Our starting point in this section is a min-max principle for eigenvalues inside a gap in the essential spectrum of an operator, as shown first in~\cite{DolEstSer-00,GriSei-99} (see e.g.,~\cite{DolEstSer-06,EstLewSer-19,SchSolTok-19} for related results). 

\begin{theorem}[Theorem 1 of~\cite{DolEstSer-06}]\label{minmax_with_gap}
    Let $A$ be a self-adjoint operator wit domain~$D(A)$, in a Hilbert space $\cH$. Suppose that $\Lambda_{\pm}$ are orthogonal projections on $\cH$ with $\Lambda_+ + \Lambda_- = \Id_\cH$ and such that 
    \[
        \mathcal{F}_\pm:=\Lambda_\pm D(A) \subset D(A).
    \]
    Define $\gamma_0$, the lower limit of the gap, as
      \begin{equation}\label{def_0_minmax_level}
        \gamma_0:=\sup\limits_{x_-\in \mathcal{F}_{-}\setminus\{0\}} \frac{\pscal{x_-,A x_-}_\cH}{\norm{x_-}_\cH^2},
    \end{equation}
    and~$\gamma_\infty$, its upper limit, as
    \[
        \gamma_\infty:=\inf(\sigma_\ess(A)\cap(\gamma_0,+\infty))\in[\gamma_0,+\infty].
    \]
    Finally, for $k \in \N\setminus\{0\}$, the min-max levels are defined as 
     \begin{equation}\label{def_minmax_levels}
		    \gamma_k := \inf\limits_{\substack{V\subset \mathcal{F}_+\\\dim V = k}}\sup\limits_{x\in (V\oplus \mathcal{F}_-)\setminus\{0\}}\frac{\pscal{x,Ax}_\cH}{\norm{x}_\cH^2}.
	\end{equation}
	If $\gamma_0 <+ \infty $ and the \emph{gap condition}
	\begin{equation}\label{gap_condition_in_thm}
		\gamma_0 < \gamma_1
	\end{equation}
	is satisfied, then for any $k\geq1$ either $\gamma_k$ is the $k$-th eigenvalue of~$A$ in~$(\gamma_0, \gamma_\infty)$, counted with multiplicity, or $\gamma_k=\gamma_\infty$. In particular, $\gamma_\infty \geq \sup_{k \geq 1}\gamma_k \geq \gamma_1$.
\end{theorem}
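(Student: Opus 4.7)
The plan is to reduce the two-sided min-max over the decomposition $\cH = F_+ \oplus F_-$ to a one-parameter family of standard Courant--Fischer problems on $F_+$ alone, parameterized by a trial level $\mu \in (\gamma_0, \gamma_\infty)$. The key enabler is the gap condition: for any such $\mu$, the block $\Lambda_-(A-\mu)\Lambda_-$ is strictly negative on $F_-$ by the very definition of $\gamma_0$, so the inner supremum over $F_-$ appearing in the definition of $\gamma_k$ can be carried out exactly via a Schur complement.

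Concretely, I would introduce for each such $\mu$ the quadratic form on $F_+$
\begin{equation*}
 q_\mu(x_+) := \sup_{x_- \in F_-} \bigl[\pscal{x_+ + x_-, A(x_+ + x_-)} - \mu\|x_+ + x_-\|^2\bigr].
\end{equation*}
By the strict coercivity just noted, the supremum is attained at a unique $x_-^\mu = L_\mu x_+$, and $q_\mu$ is a closed semibounded quadratic form represented by a self-adjoint operator $B_\mu$ on (the completion of) $F_+$. Unrolling the definitions yields the central identification
\begin{equation*}
\gamma_k \le \mu \iff \mu_k(B_\mu) \le 0,
\end{equation*}
where $\mu_k(B_\mu)$ is the $k$-th Courant--Fischer min-max level of $B_\mu$. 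Moreover $B_\mu$ is strictly decreasing in $\mu$, since subtracting a larger multiple of $\|x\|^2$ before taking the sup strictly decreases it, so the equivalence characterizes $\gamma_k$ uniquely as the zero-crossing $\mu_k(B_{\gamma_k}) = 0$.

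Next, a direct Schur-complement computation shows that $\mu \in (\gamma_0, \gamma_\infty)$ is an eigenvalue of $A$ with eigenvector $\psi$ if and only if $0$ is an eigenvalue of $B_\mu$ with eigenvector $x_+$, related by $\psi = x_+ + L_\mu x_+$. Combining this with classical Courant--Fischer applied to $B_\mu$, either $\mu_k(B_\mu)$ is the $k$-th eigenvalue of $B_\mu$ below $\inf \sigma_\ess(B_\mu)$---translating, via the correspondence, to the $k$-th eigenvalue of $A$ in $(\gamma_0, \gamma_\infty)$---or $\mu_k(B_\mu)$ equals $\inf \sigma_\ess(B_\mu)$, in which case the monotonicity argument forces $\gamma_k = \gamma_\infty$.

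The hard part, as in~\cite{DolEstSer-06}, is identifying $\inf \sigma_\ess(B_\mu) = \gamma_\infty - \mu$, which is the link translating the discrete spectrum of the reduced operator back to the gap in $\sigma(A)$. This amounts to constructing Weyl sequences for $B_\mu$ at $\gamma_\infty - \mu$ from Weyl sequences for $A$ at points of $\sigma_\ess(A) \cap [\gamma_\infty, \infty)$, via projection onto $F_+$ and control of the graph map $L_\mu$, and conversely ruling out spurious essential spectrum of $B_\mu$ below this threshold by lifting a candidate Weyl sequence $(x_+^{(n)})$ back to $x^{(n)} = x_+^{(n)} + L_\mu x_+^{(n)} \in D(A)$. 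The role of the gap condition $\gamma_0 < \gamma_1$ is precisely to ensure that at least the first zero-crossing $\mu_1(B_\mu) = 0$ occurs strictly inside the gap, making the whole framework non-vacuous.
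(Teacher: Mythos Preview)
The paper does not prove this theorem. Theorem~\ref{minmax_with_gap} is quoted verbatim from~\cite{DolEstSer-06} (Theorem~1) and used as a black box; the only work done around it in Section~\ref{sec:long_range_proof} is to verify, for the specific operator $H_\eps$ and the specific decomposition~\eqref{eq:positivesubspace}--\eqref{eq:negativesubspace}, that the gap condition $\gamma_0<\gamma_1$ holds (Lemma~\ref{lem:GapCondition}) and then to estimate $\gamma_1$ (Lemma~\ref{lem:bounds}). There is therefore nothing in the present paper to compare your proposal against.

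For what it is worth, your sketch is the standard route taken in~\cite{DolEstSer-00,DolEstSer-06,GriSei-99,SchSolTok-19}: reduce the saddle-point problem to a one-parameter family of semibounded problems on $F_+$ by optimizing out the $F_-$ component (the Schur complement $B_\mu$), use monotonicity in $\mu$ to locate $\gamma_k$ as the zero of $\mu\mapsto \mu_k(B_\mu)$, and match discrete and essential spectra of $B_\mu$ with those of $A$ in the gap. One caveat: the identification you write as $\inf\sigma_\ess(B_\mu)=\gamma_\infty-\mu$ is too clean as stated; in the cited references the essential-spectrum step is handled by a more indirect comparison (counting eigenvalues of $A$ below $\mu$ versus negative eigenvalues of $B_\mu$) rather than an explicit formula for $\sigma_\ess(B_\mu)$, and the Weyl-sequence lifting you describe requires some care with the domain inclusions and the graph map $L_\mu$. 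As a high-level outline your proposal is faithful to the literature, but the paper itself simply invokes the result.
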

In this work, we follow the approach in \cite{AldunateRicaudStockmeyerVanDenBosch2023}. Thus, we set $\alpha = (2m)^{-1}$ and consider the operators
\begin{equation}\label{lambdaplus}
\Lambda_{+} =   \begin{pmatrix} 1 & + \alpha \partial_x \\ -\alpha \partial_x & - \alpha^2 \partial_x^2 \end{pmatrix} (1 -\alpha^2 \partial_x^2)^{-1} ,
\end{equation}
and
\begin{equation}\label{lambdaminus}
\Lambda_{-} =   \begin{pmatrix}- \alpha^2 \partial_x^2 & - \alpha \partial_x \\ + \alpha \partial_x & 1 \end{pmatrix} (1 -\alpha^2 \partial_x^2)^{-1} ,
\end{equation}
which are bounded since the resolvent $(1-\alpha^2 \partial_x^2)^{-1}$ is bounded from~$L^2(\R, \C)$ to $H^2(\R, \C)$. 
They satisfy
\begin{equation*}
	\Lambda_{+} + \Lambda_{-} = \Id_\mathcal{H}\,, \quad \Lambda_{\pm}\Lambda_{\mp}=0, \quad 
    \Lambda_\pm^*= \Lambda_\pm \text{ and } \Lambda_{\pm}^2 =\Lambda_{\pm}, 
\end{equation*}
so they correspond to complementary orthogonal projections in $L^2(\R, \C^2)$.
Applying $\Lambda_\pm$ to the domain of the Dirac operator yields
\begin{equation}
\label{eq:positivesubspace}
\cF_+ = \Lambda_{+} H^1(\R, \C^2) = \left\lbrace \begin{pmatrix} h \\ -\alpha h' \end{pmatrix} = l_{+}\left(h\right) \middle| h \in H^{2}\left(\R\right) \right\rbrace,
\end{equation}
\begin{equation}
\label{eq:negativesubspace}
\cF_- = \Lambda_{-} H^1(\R, \C^2) = \left\lbrace \begin{pmatrix} -\alpha g' \\ g \end{pmatrix} = l_{-}\left(g\right) \middle| g \in H^{2}\left(\R\right) \right\rbrace.
\end{equation}

To obtain the last identities, we used
\begin{align*}
    \Lambda_{+}\begin{pmatrix} f \\ g \end{pmatrix} =  \begin{pmatrix} (1-\alpha \partial_x^2)^{-1}\Par{f+\alpha \partial_x g} \\ -\alpha \partial_x(1-\alpha \partial_x^2)^{-1}\Par{f+\alpha \partial_x g} \end{pmatrix} =: \begin{pmatrix} h \\ -\alpha h' \end{pmatrix} ,
\end{align*}
with $h= (1-\alpha \partial_x^2)^{-1}\Par{f+\alpha \partial_x g} \in H^2(\R, \C)$ and its analogue for $\cF_-$. 

\bigskip

Our motivation to consider these particular subspaces lies in their potential to naturally relate the behavior of eigenvalues near the thresholds $\pm m$ to the corresponding Schrödinger eigenvalues, see the proof of Proposition~\ref{prop:general_asymptotic} below.
At an informal level, this comes from the fact that by taking a Fourier transform, $\widehat \Lambda_{\pm}(k)$ is a first-order-in-$k$ approximation of the eigenprojections of the free Dirac operator on the subspaces with energies $\pm \sqrt{m^2 + k^2}$. As opposed to these pseudodifferential eigenprojections, $\Lambda_\pm$ have the advantage of allowing an explicit expression in terms of differential operators.

\bigskip

Now, for any $x\in H^1 (\R,\C^2)$, we have the decomposition $x = l_{+}\left(h\right) + l_{-}\left(g\right)$ so that expectation values of $H_\eps = D_m - \eps V$ are given by
\begin{equation}
\label{eq:rayleighquotientnumerator}
    \left \langle x, H_\eps x \right \rangle = \left \langle l_{+}\qty(h), H_\eps l_{+}\qty(h) \right \rangle + \left \langle l_{-}\qty(g), H_\eps l_{-}\qty(g) \right \rangle + 2 \Re \left \langle l_{+}\qty(h), H_\eps l_{-}\qty(g) \right \rangle\, . 
\end{equation}
We begin by expanding the quantities that will be needed later. Using the definitions of the subspaces~\eqref{eq:positivesubspace} and~\eqref{eq:negativesubspace} (recall that $\alpha = (2m)^{-1}$) together with the definition of the Dirac operator~\eqref{def_dirac}, we obtain
\begin{align}
  \label{eq:lplus-sandwich}
	\left \langle l_{+}\qty(h), H_{\eps} l_{+}\qty(h) \right \rangle & = m \norm{l_{+}\qty(h)}^{2} + \alpha \norm{h'}^{2}  - \epsilon \pscal{l_+(h),V l_+(h)}\, , \\
    \label{eq:lminus-sandwich}
	\left \langle l_{-}\qty(g), H_{\eps} l_{-}\qty(g) \right \rangle &= -m \norm{l_{-}\qty(g)}^{2} - \alpha \norm{g'}^{2}  - \epsilon \pscal{l_-(g),V l_-(g)}\, ,\\
    \label{eq:crossedterms}
	\left \langle l_{+}\qty(h), H_{\eps} l_{-}\qty(g) \right \rangle & =\alpha^{2} \left \langle h'', g '\right \rangle - \epsilon \pscal{l_+(h),V l_-(g)}\, .
\end{align}

In order to apply the min-max principle, the following Lemma proves the \emph{gap condition}~\eqref{gap_condition_in_thm} of Theorem~\ref{minmax_with_gap}. 

\begin{lemma} \label{lem:GapCondition} As above, let $H_\eps = D_m - \eps V$ with $V\in L^\infty$. Then, for any $\epsilon < m\norm{V}_{\infty}^{-1}$, the operator~$H_\eps$~satisfies the gap condition of~Theorem~\ref{minmax_with_gap}, i.e.,~$\gamma_0 < 0 < \gamma_1$, and the upper limit~$\gamma_\infty =m$.
\end{lemma}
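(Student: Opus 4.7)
The plan is to estimate $\gamma_0$ from above and $\gamma_1$ from below using the simplest possible test vectors, and then to identify $\gamma_\infty = m$ separately via the essential spectrum.

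For the upper bound on $\gamma_0$, I would integrate by parts in \eqref{eq:lminus-sandwich} to compute $\pscal{l_-(g), D_m l_-(g)} = -m\|g\|^2 - \tfrac{3}{4m}\|g'\|^2$. Combined with $\|l_-(g)\|^2 = \|g\|^2 + (2m)^{-2}\|g'\|^2$, elementary algebra yields the pointwise-in-$g$ bound $\pscal{l_-(g), D_m l_-(g)} \leq -m\|l_-(g)\|^2$. Adding the crude estimate $|\eps\pscal{l_-(g), V l_-(g)}| \leq \eps\|V\|_\infty\|l_-(g)\|^2$ and taking the supremum over $g\in H^1(\R)\setminus\{0\}$ yields $\gamma_0 \leq -m + \eps\|V\|_\infty$.

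For the lower bound on $\gamma_1$, I would plug the single test vector $x = l_+(h)$ (i.e.\ $g=0$) into the inner supremum in \eqref{def_minmax_levels}. Formula \eqref{eq:lplus-sandwich} and the nonnegativity of $\alpha\|h'\|^2$ immediately give $\pscal{l_+(h), H_\eps l_+(h)}/\|l_+(h)\|^2 \geq m - \eps\|V\|_\infty$ for every $h\in H^1(\R)\setminus\{0\}$. Hence the inner supremum in \eqref{def_minmax_levels} is at least $m-\eps\|V\|_\infty$ for every one-dimensional subspace of $F_+$, so $\gamma_1 \geq m - \eps\|V\|_\infty$. Under the hypothesis $\eps\|V\|_\infty < m$ we conclude $\gamma_1 - \gamma_0 \geq 2(m - \eps\|V\|_\infty) > 0$, which is the gap condition.

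For $\gamma_\infty = m$, I use that $V$ is bounded and decays at infinity (the standing assumption of the paper), so $V$ is relatively compact with respect to $D_m$ and Weyl's theorem gives $\sigma_\ess(H_\eps) = \sigma(D_m) = (-\infty,-m]\cup[m,+\infty)$. It remains to verify that $\gamma_0 \geq -m$, so that $(\gamma_0,+\infty)$ does not reach the lower branch of $\sigma_\ess$; otherwise the definition of $\gamma_\infty$ would give $\gamma_\infty = \gamma_0 \neq m$. This matching lower bound is the most delicate step: I would use a spreading-bump family $g_n(x) = n^{-1/2}\chi(x/n)$ with a fixed $\chi \in C_c^\infty(\R)$, $\|\chi\|_2 = 1$. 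Then $\|g_n\|_2 = 1$ while $\|g_n'\|_2 = O(n^{-1})\to 0$, so the Dirac Rayleigh quotient on $l_-(g_n)$ converges to $-m$; meanwhile the contribution of $V$ vanishes as $n\to\infty$, either because it is weighted by $\|g_n'\|$ (for the off-diagonal and $V_{11}$ terms) or by dominated convergence using the decay of $V_{22}$ (for $\int V_{22}|g_n|^2$, after the change of variables $y=x/n$). Hence $\gamma_0 \geq -m$, and $\gamma_\infty = m$ follows.
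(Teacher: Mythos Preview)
Your bounds on $\gamma_0$ from above and $\gamma_1$ from below are essentially identical to the paper's: both arguments use \eqref{eq:lminus-sandwich} to get $\gamma_0 \le -m + \eps\|V\|_\infty$ and the choice $g=0$ in \eqref{eq:lplus-sandwich} to get $\gamma_1 \ge m - \eps\|V\|_\infty$, hence $\gamma_0 < 0 < \gamma_1$. (Your explicit value $\langle l_-(g), D_m l_-(g)\rangle = -m\|g\|^2 - \tfrac{3}{4m}\|g'\|^2$ is just the expansion of the paper's expression $-m\|l_-(g)\|^2 - \alpha\|g'\|^2$; no extra integration by parts is actually needed.)

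The difference lies in the treatment of $\gamma_\infty$. The paper simply writes $\gamma_\infty = \inf\bigl(((-\infty,-m]\cup[m,\infty))\cap(\gamma_0,+\infty)\bigr) = m$ after establishing $\gamma_0 < 0$. Strictly speaking this last equality requires $\gamma_0 \ge -m$, since otherwise $(\gamma_0,+\infty)$ would meet the lower branch $(-\infty,-m]$ of $\sigma_{\mathrm{ess}}$ and the infimum would drop to $\gamma_0$. The paper leaves this point implicit. You identify this and supply a clean argument: a spreading family $g_n(x)=n^{-1/2}\chi(x/n)$ drives $\|g_n'\|\to 0$ so the $V_{11}$ and off-diagonal contributions in $\langle l_-(g_n), V l_-(g_n)\rangle$ vanish, while $\int V_{22}|g_n|^2 = \int V_{22}(ny)|\chi(y)|^2\,dy \to 0$ by dominated convergence using the standing decay assumption on $V$. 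This yields $\gamma_0 \ge -m$ and hence $\gamma_\infty = m$. Your proof is therefore slightly more complete on this point than the paper's, at the cost of invoking the decay of $V$ (which the paper does assume globally but does not list as a hypothesis of this particular lemma).
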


\begin{proof}
\noindent First we estimate $\gamma_{0}$ using \eqref{eq:lminus-sandwich}, 
\begin{align}
\gamma_{0} &= \sup_{g\neq 0} \dfrac{-m \norm{l_{-}\qty(g)}^{2} - \alpha \norm{g'}^{2} - \epsilon \pscal{l_-(g),V l_-(g)}}{\norm{l_{-}\qty(g)}^{2}}, \nonumber \\
&= -m +\sup_{g\neq 0}\dfrac{-\alpha \norm{g'}^{2} - \epsilon \pscal{l_-(g),V l_-(g)}}{\norm{l_{-}\qty(g)}^{2}}, \nonumber \\
&\leq -m - \inf_{g\neq 0} \dfrac{\alpha \norm{g'}^{2} + \epsilon \pscal{l_-(g),V l_-(g)}}{\norm{l_{-}\qty(g)}^{2}}, \nonumber \\
&\leq -m + \epsilon \norm{V}_{\infty}.
\end{align}
On the other hand, taking $g=0$,
\begin{align}
\label{eq:first-eigenvalue}
\gamma_{1}  &= \inf_{h \neq 0} \sup_{g} \dfrac{\left \langle l_{+}\qty(h), H_\eps l_{+}\qty(h) \right \rangle + \left \langle l_{-}\qty(g), H_\eps l_{-}\qty(g) \right \rangle + 2\Re \left \langle l_{+}\qty(h), H_\eps l_{-}\qty(g) \right \rangle}{\norm{l_{+}\qty(h)}^{2} + \norm{l_{-}\qty(g)}^{2}}, \nonumber \\
&\geq \inf_{h \neq 0} \dfrac{m\norm{l_{+}\qty(h)}^{2} + \alpha\norm{h'}^{2} - \epsilon \pscal{l_+(h),V l_+(h)}}{\norm{l_{+}\qty(h)}^{2}}, \nonumber \\
&= m + \inf_{h \neq 0} \dfrac{\alpha\norm{h'}^{2} - \epsilon \pscal{l_+(h),V l_+(h)}}{\norm{l_{+}\qty(h)}^{2}}, \nonumber \\
&\geq m - \epsilon \norm{V}_{\infty}.
\end{align}
Given any $\epsilon < m\norm{V}_{\infty}^{-1}$ we have $\gamma_0 <0<\gamma_1$ and the upper limit $\gamma_\infty $ is given by
    \[
        \gamma_\infty = \inf \Bigl(\bigl((-\infty,-m]\cup [m,+\infty)\bigr)\cap(\gamma_0,+\infty)\Bigr)=m\,. \qedhere
    \]
\end{proof}
This Lemma shows that, for sufficiently small values of $\epsilon>0$, $\gamma_1$ is either the lowest positive eigenvalue of $H_\epsilon$ or the bottom of the positive part of its essential spectrum. To establish that $\gamma_1$ is an eigenvalue, it is necessary that $\gamma_1 <m$. To this end, we establish matching lower and upper bounds, on $\gamma_1$ which show that $\gamma_1 < m$  and capture its leading-order-asymptotics in $\eps$, thus proving Proposition~\ref{prop:general_asymptotic}.

\begin{proof}[Proof of Proposition~\ref{prop:general_asymptotic}]
Recall that we may assume $V_{12}$ is real valued and that $\gamma_\infty = m$.
From this point on, we will take the convention that $\lambda_V^{(D)} (\epsilon)$ is the smallest element of the spectrum of $D_m -\epsilon V$ in $(0, +\infty)$, such that it is the smallest positive eigenvalue if such an eigenvalue exists, or the upper threshold $m$ else. For the Schrödinger case, we use the analogous convention and write $$\lambda_V^{(S)}(\eps) = \inf \left(\operatorname{spec}(-\alpha \Delta - \eps V) \right).$$ 

For instance, with this definition, whenever $\gamma_1 >\gamma_0$ we have
$$
\lambda_V^{(D)} (\epsilon) = \gamma_1.
$$

We split the proof into two parts: one addressing the lower bounds on $\gamma_1$, and the other the upper bounds.

\medskip

\noindent \textbf{Lower bounds.} To get lower bounds on $\gamma_{1}$, we take $g = 0$. Then,
\begin{align}
\gamma_{1} &\geq \inf_{h \neq 0} \dfrac{m\norm{l_{+}\qty(h)}^{2} + \alpha\norm{h'}^{2} - \epsilon  \pscal{l_+(h), V l_+(h)}}{\norm{l_{+}\qty(h)}^{2}}, \nonumber \\
&= m + \inf_{h \neq 0} \dfrac{\alpha \norm{h'}^{2} - \epsilon \int V_{11} \abs{h}^{2} -\eps \int \alpha^{2} V_{22} \abs{h'}^{2}
+ 2 \eps\int \alpha V_{12} \Re (\bar h h')
}{\norm{l_{+}\qty(h)}^{2}}. 
\end{align}

To get rid of the entries of $V_{12}$ and $V_{22}$, we bound

\begin{equation} \label{eq:bound_V22}
   \left| \int  \alpha^{2}V_{22} \abs{h'}^{2}  \right| \le \alpha^2 \norm{V_{22}}_{\infty} \norm{h'}^2.
\end{equation}

and

\begin{equation} \label{eq:bound_V12}
    2 \alpha | V_{12} \Re(\bar  h h')| \le \epsilon^{1/2} |V_{12}|^2 |h|^2 + \alpha^2 \epsilon^{-1/2} |h'|^2,
\end{equation}

such that

\begin{align*}
\gamma_{1} 
&\ge m + \inf_{h \neq 0} \dfrac{\alpha \left( 1 -\epsilon^{1/2}\alpha - \epsilon \alpha \norm{V_{22}}_{\infty} \right)\norm{h'}^{2} - \epsilon \int (V_{11} +\epsilon^{1/2} V_{12}^2) \abs{h}^{2} 
}{\norm{l_{+}\qty(h)}^{2}}.
\end{align*}

To clean up the expressions, define
$$
 V_- := V_{11} +\epsilon^{1/2} V_{12}^2, \quad  \kappa_- := \left( 1 -\epsilon^{1/2}\alpha - \epsilon \norm{V_{22}}_{\infty} \right)  \text{ and } \epsilon_- = \epsilon / \kappa_-.
$$
Note that for $\eps$ sufficiently small, $\kappa_- > 0$. We will now bound~$\gamma_1$ in terms of the lowest eigenvalue of the Schr\"odinger operator~$-\alpha \Delta - \eps_- V_-$. Thus,
\begin{align*}   
\gamma_{1}  
&\ge m + \kappa_- \inf_{h \neq 0} \dfrac{\alpha \norm{h'}^{2} - \epsilon_- \int V_- \abs{h}^{2} 
}{\norm{l_{+}\qty(h)}^{2}}, \\
&\ge m  + \kappa_- \lambda_{V_-}^{(S)}(\epsilon_-) \sup_{h \neq 0} \frac{\norm{h}^2}{\norm{l_+(h)}^2}, \\
& =  m  + \kappa_- \lambda_{V_-}^{(S)}(\epsilon_-).
\end{align*}
Our hypothesis on $V_{12}$ implies that $V_-$ satisfies either the hypothesis of Theorem~\ref{Sch_long_range_asymptotics} (with constants modified to order $\eps^{1/2}$), or of Theorem~\ref{Sch_short_range_asymptotics} where $\int V_- > 0$ for sufficiently small $\eps > 0$. In either case, both eigenvalues $\lambda^{(S)}_{V_-}(\eps)$ and $\lambda^{(S)}_{V_{11}}(\eps)$ have asymptotic expansions of the same leading order for small $\epsilon$, such that
\[\lambda_{V_-}^{(S)}(\epsilon_-) \ge  (1 + \bigO(\epsilon^{1/2})) \lambda_{V_{11}}^{(S)}(\epsilon).\]
We obtain the lower bound 
\begin{equation} \label{eq:lower_bound_gamma}
    \gamma_1 \ge m -(1 + \bigO(\epsilon^{1/2}))\lambda_{V_{11}}^{(S)}(\epsilon).
\end{equation}

\noindent \textbf{Upper bounds.}

Recall that
\begin{align*}
    \gamma_{1} 
    &= \inf_{h \neq 0} \sup_{g} \dfrac{\left \langle l_{+}\qty(h), H_{\eps}l_{+}\qty(h) \right \rangle + \left \langle l_{-}\qty(g), H_{\eps} l_{-}\qty(g) \right \rangle + 2 \Re \left \langle l_{+}\qty(h), H_{\eps} l_{-}\qty(g) \right \rangle}{\norm{l_{+}\qty(h)}^{2} + \norm{l_{-}\qty(g)}^{2}}
\end{align*}
We analyze the terms in the numerator separately.
Using the definition of $\gamma_0$, we may simply bound
\begin{equation}
  \pscal{ l_{-}\qty(g), H_{\eps} l_{-}\qty(g) } \le \gamma_0 \norm{l_-(g)}^2.
\end{equation}
Next, using \eqref{eq:lplus-sandwich} and bounds \eqref{eq:bound_V12} and \eqref{eq:bound_V22}, we have,
\begin{align*}
    \left \langle l_{+}\qty(h), H_{\eps}l_{+}\qty(h) \right \rangle 
    &= m\norm{l_{+}\qty(h)}^{2} + \alpha\norm{h'}^{2} - \epsilon  \pscal{l_+(h), V l_+(h)}, \\
    &\le m \norm{l_{+}\qty(h)}^{2} +  \alpha \left( 1 +\epsilon^{1/2}\alpha + \epsilon \alpha \norm{V}_{\infty} \right)\norm{h'}^{2} \\
    & \qquad \quad - \epsilon \int (V_{11} - \epsilon^{1/2} V_{12}^2) \abs{h}^{2}    .
\end{align*}
For the cross-terms, we expand \eqref{eq:crossedterms} to obtain 
\begin{align*}
   \pscal{l_+(h), H_\epsilon l_-(g) } = \pscal{\alpha h'' +\epsilon V_{11} h, \alpha g'} - \epsilon \pscal{h, V_{12}g} - \epsilon \alpha \pscal{h',  \alpha V_{12} g' - V_{22} g}.
\end{align*}
Using a bound similar to \eqref{eq:bound_V12} for the second term above and standard Cauchy-Schwarz for the others, yields
\begin{align*}
    2| \Re \pscal{l_+(h), H_\epsilon l_-(g)} | &\le \epsilon^{3/2} \pscal{h, V_{12}^2 h} + \epsilon^{1/2}\norm{l_-(g)}^2 \\
    & \quad \quad + 2\norm{-\alpha h'' -
 \epsilon V_{11} h}\norm{l_-(g)} \\
 & \qquad \qquad + 4\epsilon \alpha \norm{V}_{\infty} \norm{h'}\norm{l_-(g)}.
\end{align*}
This suggests to define
$$
V_+ := V_{11} - 2\epsilon^{1/2} V_{12}^2, \quad \kappa_+ := \left( 1 +\epsilon^{1/2}\alpha + \epsilon \alpha \norm{V}_{\infty} \right), \quad \epsilon_+ := \epsilon/\kappa_+,
$$
such that
\begin{align*}
  &  \pscal{l_+(h) + l_-(g), H_\epsilon (l_+(h) + l_-(g))} \\
  &  \qquad \le m \norm{l_{+}(h)}^{2} + (\gamma_0 + \epsilon^{1/2} ) \norm{l_-(g)}^2 + \kappa_+ \left( \alpha \norm{h'}^2 -\epsilon_+ \pscal{h, V_+ h}\right) \\
 & \quad \qquad + 2 \norm{l_-(g)} \left( \norm{-\alpha h'' - \epsilon V_{11} h} + 2\epsilon {\alpha} \norm{V}_{\infty} \norm{h'} \right). \numberthis \label{eq:upper-bound-cross-term}
  \end{align*}
For all $\epsilon$ sufficiently small,
$\lambda^{(S)}_{V_+}(\epsilon_+)$ is well-defined, and we take $h = h_\epsilon$, the associated normalized eigenfunction. We will nevertheless write~$\norm{h_\eps}$ to keep track of all the terms.
This gives
\begin{align*}
     \alpha \norm{h'_{\epsilon}}^2 -\epsilon_{+} \pscal{h_{\epsilon}, V_+ h_{\epsilon}} &= \lambda^{(S)}_{V_+}(\epsilon_+) \norm{h_\epsilon}^2 
     \end{align*}
 for the main term. For the cross term, we use the bounds    
 \begin{align*}
     \norm{-\alpha h_{\epsilon}'' - \epsilon V_{11} h_{\epsilon}} & \le \left(\lambda^{(S)}_{V_+}(\epsilon_+) + \epsilon \norm{V_{11} - V_+}_{\infty}\right)  \norm{h_\epsilon}, \\
      \alpha \norm{h'_{\epsilon}}^2 &\le \epsilon_+ \pscal{h_\epsilon, V_+ h_\epsilon } \le \epsilon_+ \norm{V_+}_{\infty}  \norm{h_\epsilon}^2 ,
\end{align*}
so we can bound the parenthesis in $\eqref{eq:upper-bound-cross-term}$ by 
\begin{align*}
     & \norm{-\alpha h_\eps'' - \epsilon V_{11} h_\eps} + 2\epsilon \alpha \norm{V}_{\infty} \norm{h_\eps'}  \\
     & \leq \Par{\lambda^{(S)}_{V_+}(\epsilon_+) + \epsilon \norm{V_{11} - V_+}_{\infty}+2\eps\sqrt{\alpha\epsilon_+} \norm{V}_{\infty} \norm{V_+}_{\infty}^{1/2}}\norm{h_\eps}\\
    & = \bigO(\eps^{3/2})\norm{h_\eps},
\end{align*}
where we have used again the asymptotics of $\lambda_{V_+}^{(S)}(\eps_+)$ and $ \eps_+^2|\log \eps_+|^2 \ll \eps^{3/2} $.
\medskip
Inserting these bounds in \eqref{eq:upper-bound-cross-term}, we get,
\begin{align*}
&\pscal{l_+(h_\eps) +  l_-(g),H_\eps (l_+(h_\eps) + l_-(g))}\\ &
\quad \le  m\norm{l_+(h_\epsilon)}^2 +   \kappa_+ \lambda^{(S)}_{V_+}(\epsilon_+) \norm{h_\eps}^2 \\
& \quad\qquad +(\gamma_0 + \epsilon^{1/2} ) \norm{l_-(g)}^2   +  \bigO(\epsilon^{3/2})\norm{l_-(g)}\norm{h_\epsilon}, \\
& \quad \le m \norm{l_+(h_\epsilon)}^2  + \kappa_+ \lambda^{(S)}_{V_+}(\epsilon_+) \norm{h_\epsilon}^2  +  \bigO(\epsilon^{3})\norm{h_\epsilon}^2,
\end{align*}
where the last inequality follows from completing the square and using $\gamma_0 + \eps^{1/2} < 0$. Thus,
\begin{align*}
\gamma_1 
&\le \sup_{g} \frac{ m \norm{l_+(h_\epsilon)}^2  + \kappa_+ \lambda^{(S)}_{V_+}(\epsilon_+) \norm{h_\epsilon}^2  +  \bigO(\epsilon^{3})\norm{h_\epsilon}^2}{\norm{l_{+}\qty(h_\epsilon)}^{2} + \norm{l_{-}\qty(g)}^{2}}.
\end{align*}
Since the numerator is positive for all sufficiently small $\epsilon$, we conclude that the supremum is at $g= 0$, so that we are left with
\begin{align*}
\gamma_1 
&\le \frac{ m \norm{l_+(h_\epsilon)}^2  +  \kappa_+ \lambda^{(S)}_{V_+}(\epsilon_+) \norm{h_\epsilon}^2  +  \bigO(\epsilon^{3})\norm{h_\epsilon}^2}{\norm{l_{+}\qty(h_\epsilon)}^{2} }, \\
& \le m  + \kappa_+ \lambda^{(S)}_{V_+}(\epsilon_+) + \bigO(\epsilon^{3}), \\
& \le m + \lambda^{(S)}_{V_{11}}(\epsilon)(1 + \bigO(\epsilon^{1/2})) + \bigO(\epsilon^{3}), 
\end{align*}
where in the last line we used once more that asymptotics of $ \lambda^{(S)}_{V_+}(\epsilon_+)$ and $\lambda^{(S)}_{V}(\epsilon)$ coincide up to order $\epsilon^{1/2}$. In particular, this shows that $\gamma_1 < m$, and that the upper and lower bounds match to leading order, thereby concluding the proof.\end{proof}

 Combining Proposition~\ref{prop:general_asymptotic} with the Schr\"{o}dinger expansion from Theorem~\ref{Sch_long_range_asymptotics} establishes the eigenvalue asymptotics for Dirac operators with long-range potentials stated in Theorem~\ref{thm:long_range}.

\begin{remark} \label{rem:no-off-diagonal}
In the case $\Re V_{12} = 0$, the proof simplifies considerably and we can apply the same strategy with $V_+ = V_- = V_{11}$, and $|\kappa_{\pm }-1|\le C \eps$. 
\end{remark}
\section{Second-order expansion and the proof of Theorem~\ref{thm:second_order}} \label{sec:short-range}
In this section, we prove Theorem~\ref{thm:second_order}. For the potential, we write $V = B^* A$ (with, for instance, $\cB^*:=U_{V}\abs{V}^{\frac{1}{2}}$, $\cA:=\abs{V}^{\frac{1}{2}}$ and $V = U_V |V|$ its polar decomposition). Since $V \in L^1$, the components of $A,B$ belong to $L^2$.

\medskip

The Birman--Schwinger principle then gives that $z$ is an eigenvalue of $D_m - \eps V$ if and only if $1$ is an eigenvalue of $K(z)= \epsilon A (D_m -z)^{-1} B^*$. It holds for any $V \in L^1$ by using the closed extension of $D_m - \eps V$ \emph{generated} from the perturbed resolvent following the strategy in \cite{nenciu1976selfadjointness} (see, e.g., \cite[Proposition 2.1]{dolbeault2023keller}).
Recall that the kernel of the one-dimensional Dirac resolvent $(D_m - z)^{-1}$ is
\begin{align}
    \label{eq:def_resolvent}
     R_z(x,y):=\frac{e^{- \kappa (z)|x- y|}}{2} \begin{pmatrix}  (z+m)/\kappa (z) & \sgn (x-y) \\ -\sgn (x-y) & (z-m)/\kappa (z) 
     \end{pmatrix}, 
\end{align} 
where $\kappa (z)=\sqrt{m^2-z^2}$ is chosen with $\Re(\kappa(z))\geq 0$. 
We separate the singular part near the threshold $m$, so we define $P_+ = (1, 0)$ and write
 \begin{align*} 
        R_z(x,y)
     &=
    e^{-\kappa (z) |x-y|}  \left( \frac{ m}{\kappa(z)}P_+^* P_+  - \frac{\sgn(x-y)}{2}i\sigma_2  +\frac{z-m}{2 \kappa (z)} \Id_{\C^2} \right), \\
    & =
    \frac{ m}{\kappa (z)} P_+^* P_+ + (e^{-\kappa(z)|x-y|} -1 )\frac{ m}{\kappa(z)} P_+^* P_+  \\
    & \qquad + e^{-\kappa(z)|x-y|} \left(    -\frac{\sgn(x-y)}{2}i\sigma_2  +\frac{z-m}{2 \kappa(z)} \Id_{\C^2} \right), \\
     &=:  \frac{m}{\kappa(z)} P_+^* P_+ + S_z(x,y).
\end{align*} 
This allows to decompose the Birman-Schwinger operator as
\[K(z) = \frac{\eps m}{\kappa (z)}A P_+^* P_+ B^* + \eps M(z)\,,\]
where the kernel of $M(z)$ is
\begin{equation}
    M_z(x,y) = A(x) S_z(x,y)B^*(y).
\end{equation}
We also define the vectors
\begin{align}
    b = B P_+^*, \quad a = A P_+^* \quad \text{ such that } a,b \in L^2(\R, \C^2).
\end{align}

\begin{lemma}\label{lemma_BS} Assume $V \in L^1$.
    With the previous constructions, assume that $\epsilon \norm{M(z)} < 1$. Then, $z$ is an eigenvalue of $D_m - \eps V$ if and only if
    \begin{equation} \label{eq:lemma_BS}
       \frac{\epsilon m}{\kappa(z)}\pscal{b, (1 - \eps M(z))^{-1} a}_{L^2} = 1.
    \end{equation}

\end{lemma}
\begin{proof}
By the Birman-Schwinger principle, $z$ is an eigenvalue of $D_m - \eps V$ if and only if $1$ is an eigenvalue of $K(z)$
such that we can rewrite
$$
K(z) = \frac{ \epsilon m}{\kappa(z)} | a\rangle \langle b|  + \epsilon M(z).
$$
For any fixed value of $z \in \C \setminus \bigl( (-\infty, -m] \cup [m, + \infty)\bigr)$, $M(z)$ can be easily seen to be a Hilbert-Schmidt operator since $S_z(x,y)$ is a bounded matrix and $A, B \in L^2$.
Now, we assume that $\eps \norm{M(z)}  < 1$. Then $\phi\neq 0$ is an eigenvector with eigenvalue $1$ of $K(z)$ if and only if
$$
\frac{\eps m}{\kappa(z)} a \pscal{b, \phi} = (1 - \eps M(z)) \phi  \quad \Leftrightarrow\quad  \frac{\eps m}{\kappa(z)}(1 - \eps M(z))^{-1} a    \pscal{b, \phi} =  \phi.
$$
Since $(1 - \eps M(z))$ is invertible, we notice that $\pscal{b, \phi} \neq 0 $ for any eigenvector.
Taking the inner product with $b$ to both sides we get
$$
 \frac{ \eps m}{\kappa(z)} \pscal{b, (1 - \eps M(z))^{-1} a}= 1. \qedhere
$$
\end{proof}

The next step is to expand the zeros of this function using Rouché's theorem. 
Recall that we have defined
\begin{align*}
       F^{(\pm)}  &:= \iint V(x)\Par{\sgn(x-y) (i \sigma_2)-\frac{1}{2} \abs{x-y}\Upsilon_\pm} V(y)\di y\di x 
\end{align*}
and 
\begin{align*}
    \Upsilon_\pm :=
    m(\sigma_3 \pm \Id)  = 2 m P^*_+ P_+ .
\end{align*}
\begin{lemma}
    Assume that $V$ is such that $\Re U_{11} > 0 $ (or $\Re(U_{11}) = 0$ with $\Re F^{+}_{11}>0$) and $\epsilon>0$ is sufficiently small. Then, 
    $$
\frac{\epsilon m}{\kappa(z)}\pscal{b, (1 - \eps M(z))^{-1} a}_{L^2} = 1
    $$
    has exactly one solution $z_0$ in the halfplane $ \Re(z) > 0$. 
    It satisfies 
    $$z_0 = m - \frac{1}{2m }(\eps m U_{11} + \eps^2 m F^+_{11})^2 + \bigO(\eps^4). $$
\end{lemma}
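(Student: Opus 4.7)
The equation to solve, as established by the lemma, is
\begin{equation*}
\kappa(z) \;=\; \eps m \pscal{b, (1 - \eps M(z))^{-1} a}_{L^2} \;=:\; G(z,\eps),
\end{equation*}
viewed as an analytic equation in $z$ on a neighborhood of $m$ inside the physical sheet $\{\Re \kappa(z) \ge 0\}$. My overall strategy is to expand $G$ to second order in $\eps$, observe that the leading equation $\kappa(z) = \eps m U_{11}$ has a unique root $z_1 = m - \tfrac{\eps^2 m}{2}U_{11}^2 + \bigO(\eps^4)$ in $\{\Re z > 0\}$ (the condition $\Re U_{11} > 0$ is precisely what places this root on the physical sheet), and then to apply Rouch\'e's theorem to obtain existence, uniqueness, and the $\eps^3$ correction.

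First I would establish the joint $\eps$-- and $\kappa$-expansion of $G$. Since $A, B \in L^2$ and the matrix kernel $S_z(x,y)$ is uniformly bounded on a complex neighborhood of $m$, $M(z)$ is Hilbert--Schmidt with operator norm $\bigO(1)$ and $z$-holomorphic there, so the Neumann series $(1-\eps M(z))^{-1} = \sum_{k\ge 0}\eps^k M(z)^k$ converges once $\eps$ is small. Using $|e^{-\kappa s} - 1 + \kappa s| \le \tfrac{1}{2}|\kappa|^2 s^2$ on the first piece of $S_z$ and the identity $(z-m)/(2\kappa) = -\kappa/(2(m+z))$ on the second, one verifies
\begin{equation*}
S_z(x,y) - S_m(x,y) = \bigO\bigl(|\kappa(z)|(1 + |x-y|^2)\bigr)
\end{equation*}
uniformly in $x,y$. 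Integrating against $V(x)V(y)$ and invoking the second moment $(1+|x|^2)|V| \in L^1$ to handle the factor $|x-y|^2$ produces
\begin{equation*}
\pscal{b, M(z) a}_{L^2} = F^{(+)}_{11} + \bigO(|\kappa(z)|),
\end{equation*}
and therefore, after controlling the higher terms of the Neumann series geometrically,
\begin{equation*}
G(z,\eps) = \eps m U_{11} + \eps^2 m F^{(+)}_{11} + \bigO(\eps^2|\kappa(z)|) + \bigO(\eps^3)
\end{equation*}
uniformly on that neighborhood.

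Next I would apply Rouch\'e's theorem to $f(z,\eps) := \kappa(z) - G(z,\eps)$ on a small disk around $z_1$. Since $\kappa'(z) = -z/\kappa(z)$ is of size $\eps^{-1}$ near $z_1$, one has $|\kappa(z) - \eps m U_{11}| \gtrsim r\eps^{-1}$ on the circle of radius $r$ about $z_1$, while $|f(z,\eps) - (\kappa(z) - \eps m U_{11})| = |G(z,\eps) - \eps m U_{11}| = \bigO(\eps^2)$. Choosing $r$ with $\eps^3 \ll r \ll \eps$ (say $r = \eps^{5/2}$), the first quantity beats the second, and Rouch\'e yields a unique zero $z_0$ of $f$ inside the disk. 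Roots outside a fixed neighborhood of $m$ are excluded by $|\kappa(z)| \ge c_\delta$ on $\{\Re z > 0\} \setminus \{|z - m| \le \delta\}$ combined with $|G(z,\eps)| = \bigO(\eps)$. Feeding the a priori estimate $\kappa(z_0) = \eps m U_{11} + \bigO(\eps^2)$ back into the expansion of $G$ upgrades the relation to $\kappa(z_0) = \eps m U_{11} + \eps^2 m F^{(+)}_{11} + \bigO(\eps^3)$, and squaring via $z_0 = m - \kappa(z_0)^2/(m + z_0)$ with $m + z_0 = 2m + \bigO(\eps^2)$ yields the stated two-term asymptotic.

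The main obstacle I anticipate is the uniform Hilbert--Schmidt estimate on $M(z) - M(m)$: a clean $\bigO(|\kappa|)$ remainder requires exactly the second moment $|x|^2 V \in L^1$, because the Taylor expansion of $(e^{-\kappa|x-y|}-1)/\kappa$ generates the factor $|x-y|^2$; with only a first moment (cf.~Remark~\ref{rem:two-term-with-1st-moment}) the remainder degrades to $\smallO(|\kappa|)$ and the final asymptotic to $\smallO(\eps^3)$. The marginal case $\Re U_{11} = 0$ with $\Re F^{(+)}_{11} > 0$ is handled by the same Rouch\'e scheme, with $\eps^2 m F^{(+)}_{11}$ now providing the driving imbalance that forces the root onto the physical sheet.
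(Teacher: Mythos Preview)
Your argument is essentially the paper's: expand the Neumann series, identify $\pscal{b, M(m)a} = F^{(+)}_{11}$ via the second-moment bound $|S_z - S_m| \le C|\kappa|(1+|x-y|^2)$, and then apply Rouché. The one substantive difference is that the paper first passes to the uniformizing variable $\kappa$ (which maps $\Omega_+$ conformally to itself, sends the threshold to $0$ and the branch cut to the imaginary axis) and applies Rouché to the function $g_\eps(\kappa)=\eps m U_{11}+\eps^2 m F^{(+)}_{11}-\kappa+\eps^2\bigO(\kappa)+\bigO(\eps^3)$, whereas you stay in the $z$-variable. The $\kappa$-picture buys two things that are awkward in your version. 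First, global uniqueness comes from a single Rouché comparison with the \emph{linear} function $\kappa\mapsto \kappa_0-\kappa$ on a half-disc of radius $m$; your argument covers the small disc $|z-z_1|\le\eps^{5/2}$ and the far region $|z-m|>\delta$, but leaves the intermediate annulus untreated. Second, in the marginal case $\Re U_{11}=0$ the leading candidate $z_1=m-\tfrac{\eps^2 m}{2}U_{11}^2$ lands on (or within $\bigO(\eps^3)$ of) the cut $[m,\infty)$, so a disc in $z$ of radius $\eps^{5/2}$ around it leaves $\Omega_+$; in the $\kappa$-plane the candidate $\kappa_0$ satisfies $\Re\kappa_0=\eps^2 m\,\Re F^{(+)}_{11}>0$ and a small disc around it is unproblematic. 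Both issues are resolved by the change of variable, so your proof goes through once you make that switch.
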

\begin{proof}
    We restrict to values of $z$ in $\Omega_+ = \{z \in \C\setminus [m, + \infty)| \Re(z) > 0\}$ and first show that it is possible to take $\eps>0$ sufficiently small such that for all  such $z$, we have $\eps \norm{M(z)} < 1$. 
    To this end, recall that we have defined
\begin{align*}
    S_z(x,y) &= (e^{-\kappa(z)|x-y|} -1 )\frac{ m}{\kappa(z)} P_+^* P_+ \\
    &\qquad + \frac{e^{-\kappa(z)|x-y|}}{2}\begin{pmatrix} (z-m)/\kappa(z) & -\sgn(x-y) \\ \sgn(x-y) & (z-m)/\kappa(z) \end{pmatrix}\, .
\end{align*}
In $\Omega_+$,  we have $|\kappa(z)| > |m-z| $, hence
\begin{align*}
    |S_z(x,y) |\le C ( m|x-y| + 1) \Id_{\C^2}  \le C ( m|x| + m|y| + 1) \Id_{\C^2} .
\end{align*}
This gives the following bound for the Hilbert-Schmidt norm of $M$:
\begin{align*}
    \norm{M(z)}_{HS} \le C \norm{A(x)(1 +|x|)}_{L^2} \norm{B(y)(1 +|y|)}_{L^2} = C\norm{V(1 + |x|)^2}_{L^1}.
\end{align*}
It shows that we can apply Lemma~\ref{lemma_BS} for all $z$ in $\Omega_+$ and all sufficiently small $\epsilon>0$. 
    It is convenient to expand in $\kappa(z) = \sqrt{m^2 - z^2}$, which is a conformal map from
    $\Omega_+ $ to itself. It has the advantage that the threshold $z \to m$ corresponds now to $\kappa \to 0$ and that the branch cut for the resolvent is mapped to the imaginary axis.
Thus, we look for zeros of the analytic function
$$
g_\epsilon: \Omega_+ \to \C, \quad g_\epsilon(\kappa) :=  \epsilon m \pscal{b, (1 - \eps M(z(\kappa)))^{-1} a}_{L^2} - \kappa.
$$
From the previous bound on the Hilbert-Schmidt norm, we conclude that $g_\epsilon$ also extends continuously to the imaginary axis.
We expand
$$
g_\eps(\kappa) =  \epsilon m \pscal{b, (1 + \eps M(z(\kappa))) a}_{L^2} - \kappa + \bigO(\epsilon^3) .
$$
The first term gives the leading order in $\epsilon$ and we check that 
$$
\pscal{b, a} = P_+ \int_\R B^*(x) A(x) \di x P_+^* = P_+ \int_\R V(x) \di x P_+^* = U_{11}
$$
so that
$$
g_\eps(\kappa) =  \eps m U_{11} - \kappa + \bigO(\eps^2).
$$
For the next-to-leading order in $\epsilon$, we need to expand the kernel of $M(z)$ to leading order in $\kappa$. 
We define $M^{(1)}$ as the operator with kernel
$$
M^{(1)}(x,y) =  A(x)\begin{pmatrix} - m |x-y| & -\sgn(x-y)/2 \\
\sgn(x-y)/2 & 0\end{pmatrix}
    B^*(y)
$$
such that
\begin{align*}
    &|M_{z}(x,y) - M^{(1)}(x,y) | \\
    &\le |A(x)|\left(\frac{\ee{-\kappa(z)|x-y|}-1+\kappa(z)\abs{x-y}}{\kappa(z)}  P_+^{*}P_{+} \right. \\
    &\qquad \qquad \left. -\frac{i}{2}\sgn(x-y)\sigma_2 \Par{\ee{-\kappa(z)|x-y|}-1}+\frac{z-m}{\kappa(z)}\Id_{\C^2} \right)  |B(y)| \\
    & \le C |\kappa(z)| (1+|x-y|^2) |A(x)| |B(y)| .
\end{align*}
The last inequality above is obtained by bounding the second term by $\kappa(z) |x-y| \lesssim \kappa(z) (1+ |x-y|^2)$ and the last term by
$\kappa(z) \sim \sqrt{m-z}$.
Therefore
\begin{align*} 
   & \left|\pscal{b,  \Par{M(z)- M^{(1)} }a }_{L^2}  \right|\\
    & \le C |\kappa| \iint|B(x)| |A(x)| (1 + |x|^2 + |y|^2)|B(y)| |A(y)| \di x \di y
    \\
    &\le C |\kappa |\norm{V(x)(1+|x|^2)}_{L^1}^2.
    \numberthis  \label{eq:snd-moment-bound}
\end{align*}
Now we compute the main term
\begin{align*}
   & \pscal{b,  M^{(1)} a}_{L^2} \\
      &= \iint P_+ B^*(x) A(x) \begin{pmatrix} - m |x-y| & -\sgn(x-y)/2 \\
\sgn(x-y)/2 & 0\end{pmatrix} B^*(y) A(y) P_+ \di x \di y
    \\
    & = F_{11}^+ 
\end{align*}
so we find
$$
g_\eps(\kappa) = \eps m U_{11}  + \eps^2 m F_{11}^{+} - \kappa + \eps^2\bigO(\kappa) +\bigO(\eps^3).
$$
We write $\kappa_0 = \eps m U_{11}  + \eps^2 m F_{11}^{+}$. 
Now, by using that $\kappa_0$ is of order at most $\eps$, we bound
$$
|g_\eps (\kappa) - (\kappa - \kappa_0)|  \le C_1 \eps^2 |\kappa_0| + C_1 \eps^2 |\kappa - \kappa_0| + C_2 \eps^3 \le C_3 \eps^3 + C_1 \eps^2 |\kappa - \kappa_0|.
$$
We restrict to $\eps$ sufficiently small such that $C_1 \eps^2 < 1/2$. 

Define the radius $r = 3 C_3 \epsilon^{3}$.
As soon as $\Re (\kappa_0) > 0$ and $\eps>0$ is sufficiently small, the ball 
$B(\kappa_0, r)$ is contained in $\Omega_+$
and we check that on $\partial B(\kappa_0, r)$,
we have
$$
|g_\eps (\kappa) - (\kappa - \kappa_0)| \le 
C_3 \eps^3 + C_1 \eps^2 r \le r/3 + r/2 < r.
$$
Hence, 
Rouché's theorem applies and shows the existence of a zero at distance $r = \bigO(\eps^3)$ from $\kappa_0$. Returning to the spectral parameter $z$, it has the expansion
$$
z_0  = \sqrt{m^2 - \kappa_0^2 + \kappa_0\bigO(\eps^3)}  = m -\frac{1}{2m} \kappa_0^2 +\kappa_0 \bigO(\eps^3) .
$$
On the other hand, if we take a contour approximating the boundary of a half-disc of radius $m$ centered at 0, we find that $g_\eps (\kappa)$ has exactly one zero inside this contour. Finally, for $|\kappa| > m$, there are no additional zeros. 
\end{proof}
 
 \begin{remark} \label{rmk:without-2nd-moment}
    As a final remark, let us outline what happens if we only assume $(1+|x|)V(x) \in L^1$. The only step in the proof where the integrability of $(1+|x|^2)V(x)$ was needed, is in \eqref{eq:snd-moment-bound}. By using that $M(z) - M^{(1)}(z)$ is a bounded operator (as a difference of bounded operators) and its kernel converges to zero pointwise, we obtain from dominated convergence
\begin{align*} 
   \lim_{\kappa \to 0} \left|\pscal{b,  \Par{M(z)- M^{(1)} }a }_{L^2}  \right| = 0.
\end{align*}
This gives the expansion
$$
g_\eps(\kappa) = \kappa_0 - \kappa +\eps^2\smallO(\kappa^0) +\bigO(\eps^3)
$$
and the corresponding eigenvalue expansion. 
 \end{remark}
 
\providecommand{\bysame}{\leavevmode\hbox to3em{\hrulefill}\thinspace}
\providecommand{\MR}{\relax\ifhmode\unskip\space\fi MR }
\providecommand{\MRhref}[2]{%
  \href{http://www.ams.org/mathscinet-getitem?mr=#1}{#2}
}
\providecommand{\href}[2]{#2}

\end{document}